\definecolor{pastelyellow}{rgb}{0.99, 0.99, 0.59}
\definecolor{aqua}{rgb}{0.0, 1.0, 1.0} 
\definecolor{aquamarine}{rgb}{0.5, 1.0, 0.83} 
\definecolor{bananayellow}{rgb}{1.0, 0.88, 0.21}
\definecolor{burgundy}{rgb}{0.5, 0.0, 0.13}
\definecolor{ao(english)}{rgb}{0.0, 0.5, 0.0}
\newtheorem{theorem}{Theorem}[section]
\newtheorem{proposition}[theorem]{Proposition}
\newtheorem{lemma}[theorem]{Lemma}
\newtheorem{corollary}[theorem]{Corollary}
\theoremstyle{definition}
\newtheorem{example}[theorem]{Example}
\newtheorem{remark}[theorem]{Remark}
\def\thmhead@plain#1#2#3{%
  \thmname{#1}\thmnumber{\@ifnotempty{#1}{ }\@upn{#2}}%
  \thmnote{ {\the\thm@notefont#3}}}
\let\thmhead\thmhead@plain
\newcommand{\bG}{\mathbf{G}}
\newcommand{\bH}{\mathbf{H}}
\newcommand{\bN}{\mathbf{N}}
\newcommand{\bT}{\mathbf{T}}
\newcommand{\cC}{\mathcal{C}}
\newcommand{\cF}{\mathcal{F}}
\newcommand{\cG}{\mathcal{G}}
\newcommand{\cL}{\mathcal{L}}
\newcommand{\cP}{\mathcal{P}}
\newcommand{\cS}{\mathcal{S}}
\newcommand{\cU}{\mathcal{U}}
\newcommand{\cV}{\mathcal{V}}
\newcommand{\rsp}{\mathrm{rowsp}}
\newcommand{\stab}{\mathrm{Stab}}
\newcommand{\orb}{\mathrm{Orb}}
\newcommand{\bbF}{{\mathbb F}}
\renewcommand{\leq}{\leqslant}
\newcommand{\car}{\mathrm{char}}
\newcommand{\mat}{\mathrm{Mat}}
\begin{document}

\renewcommand{\headrulewidth}{0pt}

\rhead{ }
\chead{\scriptsize An Orbital Construction of Optimum Distance Flag Codes}
\lhead{ }

\title{An Orbital Construction of\\ Optimum Distance Flag Codes}

\author{\renewcommand\thefootnote{\arabic{footnote}}
Clementa Alonso-Gonz\'alez\footnotemark[1],\,  Miguel \'Angel Navarro-P\'erez\footnotemark[1], \\ 
\renewcommand\thefootnote{\arabic{footnote}} 
 Xaro Soler-Escriv\`a\footnotemark[1]}

\footnotetext[1]{Dpt.\ de Matem\`atiques, Universitat d'Alacant, Sant Vicent del Raspeig, Ap.\ Correus 99, E -- 03080 Alacant. \\ E-mail adresses: \texttt{clementa.alonso@ua.es, miguelangel.np@ua.es, xaro.soler@ua.es.}}

{\small \date{\today}} 

\maketitle

\begin{abstract}
Flag codes are multishot network codes consisting of sequences of nested subspaces (flags) of a vector space $\bbF_q^n$, where $q$ is a prime power and $\bbF_q$, the finite field of size $q$. In this paper we study the construction on $\bbF_q^{2k}$ of full flag codes having maximum distance (\emph{optimum distance full flag codes}) that can be endowed with an orbital structure provided by the action of a subgroup of the general linear group. More precisely,  starting from a subspace code of dimension $k$ and maximum distance with a given orbital description, we provide sufficient conditions to get an optimum distance full flag code on $\bbF_q^{2k}$ having an orbital structure directly induced by the previous one. In particular, we exhibit a specific orbital construction with the best possible size from an orbital construction of a planar spread on $\bbF_q^{2k}$ that strongly depends on the characteristic of the field.
\end{abstract}

\textbf{Keywords:} Network coding, spreads, subspace codes, orbit codes, flag codes, general linear group.

%%%%%%%%%%%%%%%%%%%%
%                  %
%   Introduction   %
%                  %
%%%%%%%%%%%%%%%%%%%%

\section{Introduction}\label{sec:Introduction}

Random linear network coding is a powerful tool for data transmission over noisy and lossy networks. First introduced in \cite{HoKoetMedKarEff03}, this network operates with {\em packets}, consisting of $n$ symbols of a finite field $\bbF_q$ (for a prime power $q$), which can be interpreted as vectors in the vector space $\bbF_q^n$. The intermediate nodes of the network transmit random linear combinations of the packets they receive.  The seminal paper in this area is \cite{KoetKschi08}, where K\"otter and Kschischang realized that the vector subspace generated by the packets sent by the source is preserved in the network. Thus, they proposed to encode the information in vector subspaces of $\bbF_q^n$ and defined the {\em subspace channel}.  In this setting, a {\em subspace code} is just a non-empy subset of the {\em projective space} $\cP_q(n)$, which is the collection of all vector subspaces of $\bbF_q^n$. Since then, the study of these codes has led to many papers in recent years (see for instance \cite{Gref2018} and the references inside). Most of this research focuses on the study of {\em constant dimension codes}, that is, subspace codes in which all subspaces have the same dimension (we refer the reader to \cite{TrautRosen18} for a survey on constructions of these codes). A leading family of constant dimension codes are {\em spread codes} (or just {\em spreads}), since they attain the maximum possible distance and the largest size for that distance. A spread of $\bbF_q^n$ is a collection of subspaces of $\bbF_q^n$, all of the same dimension, which partition the ambient space. Such a family of subspaces exists if, and only if, their dimension divides $n$ \cite{Segre64}. Within the scope of network coding, spread codes were introduced in \cite{ManGorRos2008}. Another interesting family of constant dimension codes is the one of {\em orbit codes}. First defined in \cite{TrautManRos2010}, these codes appear as orbits of subspaces under the action of a subgroup of the general linear group, $GL(n,q)$. Many of the known families of constant dimension codes are, in fact, orbit codes \cite{TrautManRos2010}. Due to their orbital structure, the size of these codes clearly depends on the order of the acting group and their minimum distance can be computed in a simpler manner \cite{TrautManBraRos2013}. 

Subspace codes are also known as {\em one-shot subspace codes} since they use the subspace channel only once. In contrast, when the subspace channel is used many times, we talk about  {\em multishot subspace codes} \cite{NobUcho09}. When the parameters $q$ and $n$ cannot be increased, multishot subspace codes arise as an interesting substitute of the standard subspace codes in order to improve efficiency. In this context, an {\em $r$-shot subspace code} over $\cP_q(n)$ is a non-empty subset of $\cP_q(n)^r$ and a {\em codeword} is just an $r$-tuple of subspaces of $\cP_q(n)$, which is sent through the subspace channel using $r$ shots. 

In this paper, we focus on a special type of multishot subspace codes, in which the codewords are sequences of nested subspaces in $\cP_q(n)$, that is, {\em flags}. The use of flags in network coding was introduced in \cite{LiebNebeVaz2018}, where the authors developed a network channel model for flags and  provided constructions of flag codes using group actions. Later on, flag codes having maximum distance, called {\em optimum distance flag codes}, have been studied in \cite{CasoPlanar,CasoGeneral}. In  \cite{CasoPlanar},  these flag codes are characterized in terms of the constant dimension codes used at each shot and, for $n=2k$, it is proved that optimum distance full flag codes with the largest size are exactly those having a spread at the $k$-th shot. In the same paper, the authors provide a precise construction of these codes as well as a decoding algorithm.

The main objective of the present paper is to provide an orbital construction of optimum distance full flag codes on $\bbF_q^{2k}$ with the best size. Taking into account the results obtained in \cite{CasoPlanar}, we want to do this by extending to flags the orbital expression appearing in \cite{TrautManRos2010} of a specific planar spread $\cS$ under the action of a particular group $\bG$. In a general setting, given a subspace code of dimension $k$ and maximum distance equipped with an orbital structure provided by the action of a given subgroup of $GL(2k,q)$, we investigate how this action can be extended to the full flag variety on $\bbF_q^{2k}$ in such a way we obtain optimum distance full flag codes with an orbital structure directly inherited from the one fixed at dimension $k$. In particular, we conclude that the action of the chosen group $\bG$ on the planar spread $\cS$ cannot be properly extended to the flag codes level. Nevertheless, we find a suitable subgroup $\bH$ of $\bG$ whose action on $\cS$ also works for full flags on $\bbF_{q}^{2k}$. This action, in addition, allows us to construct optimum distance full flag codes with the largest size and a number of orbits that deeply depends on the parity of the characteristic of the field $\bbF_q$.

The article is organized as follows. In Section \ref{sec:Preliminaries} we summarize all the basic background we need on finite fields and subspace codes. Section \ref{sec:flags} begins  with a brief review of known results on flag codes, which leads to the study of some properties of orbit flag codes. We give sufficient conditions to obtain optimum distance orbit full flag codes on $\bbF_q^{2k}$ from orbit codes of dimension $k$ with maximum distance. Moreover, we show that, under these conditions, such codes codes directly inherit the orbital structure fixed at dimension $k$. Section \ref{sec:construc} is devoted to present a specific orbital construction of optimum distance full flag codes on $\bbF_q^{2k}$ with the best size. Taking $\cS$ and $\bG$ as in the preceding paragraph, we find a suitable subgroup $\bH$ of  $\bG$, whose action on $\cS$ can be extended to flags in the appropriate manner. In particular, by translation of the problem to the spread of lines of $\bbF_{q^k}^2$, we show that this group $\bH$ provides a nice orbital description of the spread $\cS$ strongly depending on the characteristic of the field. This orbital description of $\cS$ naturally induces an orbital construction of optimum distance full flag codes with the best size as union of at most two orbits.

%%%%%%%%%%%%%%%%%%%%
%                  %
%   Preliminaries  %
%                  %
%%%%%%%%%%%%%%%%%%%%

\section{Preliminaries}\label{sec:Preliminaries}

In this section we collect some definitions and known results we will need later on. First, we give some background on finite fields; then we briefly review some concepts and facts on subspace codes and two special families of them: spread codes and orbit codes.

\subsection{Finite fields}\label{subsec:finitefields}
We recall some known facts about finite fields that can be found in \cite{Niede}. 

Let $q$ be a prime power. We denote the finite field of $q$ elements by $\bbF_q$. 
Given a positive integer $k$, a generator $\alpha$ of the multiplicative cyclic group $\bbF_{q^k}^*$ is said to be a {\em primitive element} of the field $\bbF_{q^k}$. In this case, the multiplicative order of $\alpha$ is $q^k-1$. Moreover, if $p(x)=x^k+\sum_{i=0}^{k-1}p_ix^i$ is the minimal polynomial of $\alpha$ over $\bbF_q$, we say that $p(x)$ is a {\em primitive polynomial} and we have the following isomorphisms of fields
\[
\bbF_{q^k}\cong \bbF_q[x]/(p(x)) \cong \bbF_{q}[\alpha].
\]
It turns out that the finite field $\bbF_{q^k}$ can be realized as $\bbF_{q^k}\cong \bbF_{q}[\alpha]=\{0\}\cup \langle \alpha\rangle$.

On the other hand, the {\em companion matrix} $P$ of $p(x)$ over $\bbF_q$ is defined as the following $k\times k$ matrix 
\[
P=
\begin{pmatrix}
 0 & 1        &  0        &  \cdots & 0\\
  0      & 0  &  1        &  \cdots & 0\\
  \vdots &  \vdots  &   \vdots  &  \ddots & \vdots \\
   0      & 0        &  0  &  \cdots & 1\\
  -p_0      & -p_1        &  -p_2        &  \cdots & -p_{k-1}
\end{pmatrix}.
\]
The characteristic polynomial of $P$ is $p(x)$. Therefore, the eigenvalues of $P$ are precisely the zeros of $p(x)$, that is, the elements $\alpha, \alpha^q, \ldots, \alpha^{q^{k-1}}$. In particular, it follows that $\det(P)=\alpha^{1+q+q^2+\dots +q^{k-1}}=\alpha^t$, where $t=\frac{q^k-1}{q-1}$. Moreover, the $\bbF_q$-algebra $ \bbF_{q}\left[ P \right]$ is isomorphic to the finite field $\bbF_{q^k}$, since the map  
\begin{equation}\label{eq:isoalphaP}
\begin{array}{cccc}
\phi : & \bbF_{q}[\alpha]                 &  \longrightarrow &         \bbF_{q}\left[ P \right]\\
         & \sum_{i=0}^{k-1}a_i\alpha^i       & \longmapsto & \sum_{i=0}^{k-1}a_iP^i
\end{array}
\end{equation}
is a field isomorphism mapping $\alpha$ into $P$. As a consequence, $P$ is a primitive element of the field $\bbF_q[P]$, that is, $P$ has multiplicative order $q^k-1$ and $\bbF_q[P]=\{0\}\cup \langle P\rangle$.

\subsection{Subspace codes, spreads and orbit codes}

Given the finite field $\bbF_q$ and a positive integer $n$, the set of all vector subspaces of $\bbF_q^n$,  denoted by  $\cP_q(n)$, is a metric space with respect to the subspace distance defined by 
\begin{equation}\label{eq:d_S}
d_S(\cU,\cV)=\dim(\cU)+\dim(\cV)-2\dim(\cU\cap\cV)
\end{equation}
for all $\cU,\cV\in \cP_q(n)$. 
For any integer $k\in\{0,\ldots, n\}$, the set of all $k$-dimensional vector subspaces of $\bbF_q^n$ is called the {\em Grassmann variety} (or simply the {\em Grassmannian}) of dimension $k$ and it is denoted by $\cG_q(k,n)$. Thus, it follows that $\cP_q(n)=\cup_{k=0}^n \cG_q(k,n)$.

A {\em subspace code} is a non-empty set $\cC$ of $\cP_q(n)$ and the {\em minimum (subspace) distance} of $\cC$ is defined as 
\[
d_S(\cC)=\min\{d_S(\cU,\cV)\ |\ \cU,\cV\in\cC, \cU\neq \cV\}.
\]
Notice that $d_S(\cC)>0$ whenever $|\cC|>1$. If $|\cC|=1$, we put $d_S(\cC)=0$. A subspace code $\cC$ is said to be a {\em constant dimension code} if $\cC\subseteq \cG_q(k,n)$,  for some dimension $k\in\{1,\ldots, n-1\}$. The minimum distance of such a code is an even integer and it is upper-bounded by  $d_S(\cC)\leq \min\{2k, 2(n-k)\}$. This bound is attained by codes in which the intersection of every two different subspaces has dimension equal to $\max\{ 0, 2k-n\}$ (see \cite{KoetKschi08} for all the basics on subspace codes and constant dimension codes). Now, we introduce the concepts of spread and orbit codes, which are the families of constant dimension codes that we need in this paper.

%%%%%%%%%%%%%%%%%%%%%%%%%%

\subsubsection{Spreads}\label{subsec:spreads}

A $k$-dimensional {\em spread} of $\bbF_q^n$, or a {\em $k$-spread}, is a set of elements of $\cG_q(k,n)$ that pairwise intersect trivially and cover the whole space $\bbF_q^n$. Spreads are geometrical objects (see \cite{Hirschfeld98, Segre64}) that can be used as constant dimension codes. In this case, we speak of {\em spread codes}. Spreads in $\cG_q(k,n)$ exist if, and only if, $k$ divides $n$.  One can easily compute the minimum distance of a $k$-spread code of $\bbF_q^n$ as  $2k$ and its cardinality as $(q^n-1)/(q^k-1)$  (see \cite{Hirschfeld98, ManGorRos2008}). It follows that spreads are an interesting type of constant dimension codes since they have the maximum possible distance and the best possible size for that distance. We will focus on {\em planar spreads}, that is, spreads where the sum of any two different elements covers the whole space $\bbF_q^n$. Equivalently, a planar spread is a $k$-spread of $\bbF_q^{2k}$.

We will use a concrete construction of planar spreads of $\bbF_q^{2k}$ inspired in the general construction of spreads due to Segre \cite{Segre64} and first used in the network coding setting in \cite{ManGorRos2008}. To do this, we use the matrix representation of a $k$-dimensional subspace $\cU\in  \cG_q(k,n)$. A {\em generator matrix} $U\in\mat_{k\times n}(\bbF_q)$ of $\cU$, is a full-rank matrix whose rows form a basis of $\cU$. Hence, 
$
\cU=\rsp(U)=\{xU\ |\ x\in\bbF_q^k\}
$.

Let  $P$ be the companion matrix of a primitive polynomial $p(x)\in \bbF_q[x]$ of degree $k$. If we denote by $0_k, I_k\in\mat_{k\times k}(\bbF_q)$ respectively to the $k\times k$ zero and  identity matrices, then the collection of subspaces 
\begin{equation}\label{def: spread Manganiello}
{\cal S}= \left\{ \rsp (0_k | I_k ), \rsp ( I_k | 0_k ), \rsp ( I_k | P^i )\ |\   i=0, \ldots, q^k-2  \right\}\end{equation}
is a planar spread code of $\bbF_q^{2k}$ (see \cite{ManGorRos2008}). In the same paper, the authors gave another way to view this spread by using the field isomorphism $\phi$ given in (\ref{eq:isoalphaP}). Notice that the following map, which is called {\em field reduction map}, is a natural embedding of the Grassmannian of lines of $\bbF_{q^k}^2$ into  $\cG_{q}(k,2k)$:
\begin{equation}\label{eq:fieldreduction}
\begin{array}{cccc}
\varphi: & \cG_{q^k}(1,2)   & \longrightarrow        & \cG_{q}(k,2k)\\
             & \rsp(x_1, x_2)   & \longmapsto & \rsp(\phi(x_1) | \phi(x_2)).
\end{array}
\end{equation}
Clearly, $\cG_{q^k}(1,2)$ is a planar spread of $\bbF_{q^k}^2$ and, as proved in \cite{ManGorRos2008}, $\varphi(\cG_{q^k}(1,2))=\cS$ is a planar spread of $\bbF_{q}^{2k}$. We will make use of this bijective map between $\cG_{q^k}(1,2)$ and $\cS$ in the following sections. 

\subsubsection{Orbit codes}\label{subsec:orbitcodes}

Constant dimension codes arising from group actions are called {\em orbit codes} and were introduced in \cite{TrautManRos2010}. In order to define them, we consider the general linear group of degree $n$ over $\bbF_q$, denoted by $GL(n,q)$, composed of all invertible matrices of $\mat_{n\times n}(\bbF_q)$. Now, given a full-rank matrix $U\in\mat_{k\times n}(\bbF_q)$ and the subspace $\cU=\rsp(U)\in \cG_q(k,n)$, one has the following group action on the Grassmann variety:
\[
\begin{array}{ccc}
\cG_q(k,n) \times GL(n,q) & \longrightarrow & \cG_q(k,n)\\
(\cU, A)                            & \longmapsto & \cU\cdot A=\rsp(UA).
\end{array}
\]
This operation is well-defined since it is independent from the generator matrix we choose for $\cU$ (see \cite{TrautManRos2010}).

This action allows to construct constant dimension codes starting from a given subspace. To be precise, if $\cU$ is a $k$-dimensional subspace of $\bbF_q^n$, its orbit under the action of some subgroup $\bT$ on $\cG_q(k, n)$ is a constant dimension code that we denote by
\[
\orb_{\bT}(\cU)=\{\cU\cdot A\ |\ A\in \bT \}\subseteq \cG_q(k,n) .
\]
Codes constructed in this way are called {\em orbit codes}. The  stabilizer of $\cU$ under the action of $\bT$ is the subgroup $\stab_{\bT}(\cU)=\{A\in\bT\ |\ \cU\cdot A=\cU\}$ and the size of the orbit code $\orb_{\bT}(\cU)$ is
\[
|\orb_{\bT}(\cU)|=\frac{|\bT|}{|\stab_{\bT}(\cU)|}.
\]
Moreover, the minimum distance of $\orb_{\bT}(\cU)$ can be computed as
\[
d_S(\orb_{\bT}(\cU))=\min\{d_S(\cU,\cU\cdot A)\ |\ A\in \bT\setminus \stab_{\bT}(\cU)\}.
\]
The reader is referred to \cite{TrautManRos2010, TrautManBraRos2013} for a general background on  orbit codes.

In \cite{TrautManRos2010} the authors suggest an orbital description of the planar spread $\cal S$ provided in (\ref{def: spread Manganiello}). Let us recall it. For the rest of the paper, we fix ${\cal U}_k=\rsp( I_k | 0_k )$, that is, the standard $k$-dimensional vector subspace of $\bbF_q^{2k}$. If $P$ is the companion matrix considered in  (\ref{def: spread Manganiello}), then

\begin{equation}\label{def: spread Manganiello orbital}
{\cal S}= \orb_{\bG}({\cal U}_k) = \{ {\cal U}_k\cdot A\ |\ A\in \bG\} \subseteq \cG_q(k,2k),
\end{equation}
where $\bG$ is  the following subgroup of $GL(2k,q)$
  
\begin{equation}\label{def: Grupo grande}
\bG=
\left\langle
\begin{pmatrix}
0_k & I_k\\
I_k & 0_k
\end{pmatrix}, \ 
\begin{pmatrix}
I_k& P^i\\
0_k & I_k
\end{pmatrix} \ \Big | \ i=0, \ldots, q^k-2
\right\rangle.
\end{equation}

This orbital construction of $\cS$, as the orbit of $\cU_k$ under the action of the group $\bG$, will be crucial in this paper. 

%%%%%%%%%%%%%%%%%%%%%%

\section{Flag codes}\label{sec:flags}

Flag varieties on the vector space $\bbF_q^n$  are classical geometrical concepts that can be seen as an  extension of Grassmann varieties. In the setting of network coding, the use of flags was introduced in \cite{LiebNebeVaz2018}. We start the section by reviewing some known results on flag codes. Next, we will focus on studying some properties of orbit flag codes. From there, we will give sufficient conditions to build orbit full flag codes on $\bbF_q^{2k}$ with the best possible distance from orbit codes in $\cG(k,2k)$ attaining the maximum distance. Finally, we show how to increase the cardinality of these flag codes by considering the union of orbits under the action of the same group.   

\subsection{Background on flag codes}
 
A {\em flag} of type $(t_1, \ldots, t_r)$ on  $\mathbb{F}_q^n$, with $0<t_1<\cdots <t_r<n$,  is an element $\mathcal{F}=(\mathcal{F}_1,\ldots,  \mathcal{F}_r)$ of $\mathcal{G}_q(t_1,n) \times \cdots \times \mathcal{G}_q(t_r,n) \subseteq \mathcal{P}_q(n)^r$ such that
\[
\{0\}\subsetneq \mathcal{F}_1 \subsetneq \cdots \subsetneq \mathcal{F}_r \subsetneq \mathbb{F}_q^n.
\]
We say that $\mathcal{F}_i$ is the {\em $i$-th subspace} of the flag $\cF$. When the type vector is $(1, 2, \ldots, n-1),$ we speak about {\em full flags}.  The {\em flag variety} of type $(t_1, \ldots, t_r)$ on $\mathbb{F}_q^n$ is the space of flags of type $(t_1, \ldots, t_r)$ on $\mathbb{F}_q^n$ and will be denoted by $\mathcal{F}_q((t_1,\ldots, t_r),n)$. It can be naturally endowed with a metric by extending the subspace distance provided in (\ref{eq:d_S}). 
Given  $\cF=(\mathcal{F}_1,\ldots,  \mathcal{F}_r)$ and $\cF'=(\mathcal{F}'_1,\ldots,  \mathcal{F}'_r)$ two flags in $\mathcal{F}_q( (t_1, \ldots, t_r),n)$, the {\em flag distance} between $\cF$ and $\cF'$ is
\[
d_f(\cF,\cF')= \sum_{i=1}^r d_S(\mathcal{F}_i, \mathcal{F}'_i).
\] 

A {\em flag code of type $(t_1,\ldots,t_r)$} on $\bbF_q^n$ is a non-empty subset ${\cal C}$ of $ \mathcal{F}_q((t_1,\ldots, t_r),n)$. The {\em minimum distance} of the flag code $\cC$ is given by
\[
d_f(\cC)=\min\{d_f(\cF,\cF')\ |\ \cF,\cF'\in\cC, \ \cF\neq \cF'\}.
\]
If $|\cC|=1$, we put $d_f(\cC)=0$. Notice that $d_f(\cC)$ is upper bounded by (see \cite{CasoPlanar})
\begin{equation}\label{eq:quotamaxdistflag}
d_f(\mathcal{C}) \leq 2\left(\sum_{t_i \leq \lfloor \frac{n}{2}\rfloor} t_i + \sum_{t_i > \lfloor \frac{n}{2}\rfloor} (n-t_i)\right).
\end{equation}

Flag codes attaining the bound (\ref{eq:quotamaxdistflag}) are called {\em optimum distance flag codes} and have been studied in  \cite{CasoPlanar, CasoGeneral}. The {\em $i$-projected code} $\cC_i$ of $\cC$ is the constant dimension code given by the set of $i$-th subspaces of flags in $\cC$, that is, 
\[
\mathcal{C}_i= \left\lbrace \mathcal{F}_i  \ | \ (\cF_1,\ldots, \cF_i, \dots, \cF_r) \in \mathcal{C}\right\rbrace \subseteq \mathcal{G}_q(t_i,n),
\]
 for $i=1,\ldots, r$ (see \cite{CasoPlanar}). Notice that $1\leq |\cC_i|\leq |\cC|$ for all $i=1,\ldots, r$. 
 
Having projected codes with the maximum possible distance is a necessary condition in order to obtain optimum distance flag codes, but it is not enough as it was proved in \cite{CasoPlanar}. For this reason, in that paper, the concept of {\em disjointness} was introduced. The flag code $\cC$ is said to be {\em disjoint} if  $\vert \mathcal{C}_1\vert =\cdots=\vert\mathcal{C}_r\vert=\vert\mathcal{C}\vert$. Using this definition, one has the following characterization of optimum distance flag codes.

\begin{theorem} \cite[Th. 3.11]{CasoPlanar}\label{teo:carac_odfc}
Let $\cC$ be a flag code of type $(t_1, \ldots, t_r)$. The following statements are equivalent:
\begin{enumerate}
\item[(i)] The code $\cC$ is an optimum distance flag code.
\item[(ii)] The code $\cC$ is disjoint and every projected code $\cC_i$ attains the maximum possible distance.
\end{enumerate}
\end{theorem}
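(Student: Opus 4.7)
The plan is to prove both implications by exploiting the additive structure of the flag distance together with the fact that the bound in~(\ref{eq:quotamaxdistflag}) decomposes as a sum of per-shot maxima. Concretely, I will write
\[
D \;=\; 2\left(\sum_{t_i \leq \lfloor n/2\rfloor} t_i + \sum_{t_i > \lfloor n/2\rfloor} (n-t_i)\right) \;=\; \sum_{i=1}^r \min\{2t_i,\, 2(n-t_i)\},
\]
where the right-hand form is just the observation that $\min\{t_i,n-t_i\}$ equals $t_i$ or $n-t_i$ according to whether $t_i\leq\lfloor n/2\rfloor$ or not. Thus $D$ is exactly the sum, over all shots, of the largest subspace distance attainable in $\cG_q(t_i,n)$.

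For (ii)~$\Rightarrow$~(i), I would take two distinct flags $\cF,\cF'\in\cC$ and use disjointness to deduce that the projection $\cC\to\cC_i$ is injective, so that $\cF_i\neq \cF'_i$ for every $i$. Since each projected code $\cC_i$ already achieves the maximum distance $\min\{2t_i, 2(n-t_i)\}$, the subspace distance $d_S(\cF_i,\cF'_i)$ is at least this value. Summing over $i=1,\dots,r$ gives $d_f(\cF,\cF')\geq D$ for every pair of distinct codewords, and comparing with~(\ref{eq:quotamaxdistflag}) yields $d_f(\cC)=D$, i.e.\ $\cC$ is an optimum distance flag code.

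For (i)~$\Rightarrow$~(ii), which is the heart of the statement, assume $d_f(\cC)=D$ and pick any $\cF\neq\cF'$ in $\cC$. On the one hand, $d_f(\cF,\cF')\geq D$ by hypothesis; on the other hand, $d_f(\cF,\cF')=\sum_i d_S(\cF_i,\cF'_i)\leq \sum_i \min\{2t_i,2(n-t_i)\}=D$, because each summand is bounded by the maximum distance in $\cG_q(t_i,n)$. The two chains of inequalities force equality term-by-term, so $d_S(\cF_i,\cF'_i)=\min\{2t_i,2(n-t_i)\}$ for every index $i$ and every pair of distinct flags in $\cC$. Since this common value is strictly positive, $\cF_i\neq \cF'_i$ for every $i$, which says that each projection $\cC\to\cC_i$ is injective (hence $|\cC_i|=|\cC|$, giving disjointness) and $d_S(\cC_i)=\min\{2t_i,2(n-t_i)\}$ is the maximum possible.

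The only non-bookkeeping step is the decomposition of $D$ as a sum of per-shot Singleton-type bounds; once this is in place, both directions reduce to a saturation argument on an additive distance. I do not anticipate a genuine obstacle: the subtlety lies only in noticing that optimality of $d_f(\cC)$ forces saturation \emph{simultaneously at every shot and for every pair of distinct codewords}, which is precisely what is needed to recover both disjointness and per-shot optimality from the single scalar hypothesis $d_f(\cC)=D$.
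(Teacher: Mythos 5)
Your proof is correct, and the argument---decomposing the bound in~(\ref{eq:quotamaxdistflag}) as $\sum_i \min\{2t_i,2(n-t_i)\}$ and then running a squeeze/saturation argument on the additive flag distance---is exactly the natural approach; note that the paper does not prove this result itself but cites it from~\cite{CasoPlanar}, so there is no in-paper proof to compare against, but your argument is the standard one. One small imprecision you may wish to polish: in (ii)~$\Rightarrow$~(i), since $\min\{2t_i,2(n-t_i)\}$ is simultaneously the minimum distance of $\cC_i$ \emph{and} the largest possible subspace distance in $\cG_q(t_i,n)$, every pair of distinct elements of $\cC_i$ attains it exactly, so ``at least'' can be sharpened to equality---though the inequality version you wrote already suffices once combined with the upper bound.
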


In  \cite{CasoPlanar} it is proved that,  for $n=2k$,  optimum distance full flag codes with the largest size are exactly those having a planar spread as a projected code.

\begin{theorem}\cite[Cor. 4.3]{CasoPlanar}\label{teo:odfc_maxsize}
Let $\cC$ be an optimum distance full flag code on $\bbF_q^{2k}$. Then $|\cC|\leq q^k+1$. Equality holds if, and only if, the $k$-projected code of $\cC$ is a planar spread of $\bbF_q^{2k}$.
\end{theorem}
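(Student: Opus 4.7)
The plan is to reduce everything to a statement about the $k$-projected code $\cC_k$, which sits in $\cG_q(k,2k)$, and to exploit the fact that the maximum cardinality for codes in this Grassmannian with maximum distance is precisely $q^k+1$.

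First I would invoke Theorem \ref{teo:carac_odfc}: since $\cC$ is an optimum distance full flag code, it is disjoint, so in particular $|\cC|=|\cC_k|$. Moreover, by the same characterization, $\cC_k$ attains the maximum possible distance in $\cG_q(k,2k)$, which from the bound $d_S(\cC_k)\leq \min\{2k,2(2k-k)\}=2k$ equals $2k$.

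Next I would translate the condition $d_S(\cC_k)=2k$ into a geometric statement. Using the subspace distance formula (\ref{eq:d_S}), two $k$-subspaces $\cU,\cV\in\cC_k$ satisfy $d_S(\cU,\cV)=2k-2\dim(\cU\cap\cV)$, so $d_S(\cC_k)=2k$ is equivalent to $\cU\cap\cV=\{0\}$ for every pair of distinct elements of $\cC_k$. A standard counting argument then yields the bound: the $|\cC_k|$ subspaces contain $|\cC_k|(q^k-1)$ nonzero vectors pairwise, and this number cannot exceed $|\bbF_q^{2k}\setminus\{0\}|=q^{2k}-1$, so
\[
|\cC|=|\cC_k|\leq \frac{q^{2k}-1}{q^k-1}=q^k+1.
\]

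For the equivalence in the equality case, equality in the above counting happens exactly when the nonzero vectors of $\bbF_q^{2k}$ are partitioned by the elements of $\cC_k$, which is precisely the definition of a planar $k$-spread. Conversely, if $\cC_k$ is a planar spread of $\bbF_q^{2k}$, its cardinality is $(q^{2k}-1)/(q^k-1)=q^k+1$ as recalled in Section \ref{subsec:spreads}, and disjointness of $\cC$ gives $|\cC|=|\cC_k|=q^k+1$.

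I do not expect any serious obstacle: once the characterization theorem of optimum distance flag codes is available, the bound is a classical partial-spread counting argument, and the equality case is immediate from the definition of a spread as a partition. The only care needed is to remember that disjointness is essential to pass from $|\cC_k|$ back to $|\cC|$, which is exactly what Theorem \ref{teo:carac_odfc} provides.
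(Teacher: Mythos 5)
Your argument is correct. The paper does not actually prove this statement---it is imported as \cite[Cor.~4.3]{CasoPlanar}---but your reconstruction is the natural one: Theorem \ref{teo:carac_odfc} gives disjointness (so $|\cC|=|\cC_k|$) and maximum distance $2k$ for $\cC_k\subseteq\cG_q(k,2k)$, the pairwise-trivial-intersection reformulation plus the classical partial-spread counting bound gives $|\cC_k|\leq(q^{2k}-1)/(q^k-1)=q^k+1$, and equality in that count is by definition the spread property, with disjointness carrying the cardinality back and forth between $\cC$ and $\cC_k$.
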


\subsection{Orbit flag codes}\label{sec:propFlagsOrb}

Since flags are sequences of subspaces of $\bbF_q^n$, the group action given in Section \ref{subsec:orbitcodes} on the Grassmannian can be easily extended to the space $\cF_q((t_1, \ldots, t_r),n)$. Hence, given a flag $\cF=(\cF_1, \ldots, \cF_r)$ of type $(t_1, \ldots, t_r)$ on $\bbF_q^n$ and a subgroup $\bT$ of $GL(n,q)$, the {\em orbit flag code} generated by $\cF$ under the action of $\bT$ is
$$
\orb_\bT(\cF)=\{\cF\cdot A\ |\ A\in\bT\}=\{(\cF_1\cdot A, \ldots, \cF_r\cdot A)\ |\ A\in\bT\}\subseteq \cF_q((t_1, \ldots, t_r),n) .
$$
It follows that the size of an orbit flag code can be computed as 
\[|\orb_\bT(\cF)|=\frac{|\bT|}{|\stab_\bT(\cF)|},
\]
where $\stab_\bT(\cF)=\{A\in\bT\ | \  \cF\cdot A=\cF\}$  denotes the stabilizer subgroup of the flag $\cF$ under the action of $\bT$ and its minimum distance is 
\[
d_f(\orb_\bT(\cF))=\min\{d_f(\cF,\cF\cdot A)\ |\ A\in \bT\setminus \stab_{\bT}(\cF)\}.
\]

In this context, the projected codes of the orbit flag code $\orb_\bT(\cF)$ are orbit subspace codes and can be written as
$$
\orb_\bT(\cF)_i=\orb_\bT(\cF_i)\subseteq \cG_q(t_i,n), \ i=1, \ldots, r.
$$

The next result states the relationship between the stabilizer subgroup of a flag and the ones of its subspaces under the action of the same group. The proof is straightforward.

\begin{lemma}\label{lemma: estabilizador flag}
With the previous notation, it holds
$$
\stab_\bT(\cF)=\bigcap_{i=1}^r \stab_\bT(\cF_i).
$$
\end{lemma}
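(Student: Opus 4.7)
The plan is simply to unravel the definitions, since the lemma is a direct consequence of how the action is defined componentwise on the flag variety. I would argue by a chain of equivalences. By definition, $A \in \stab_\bT(\cF)$ if and only if $A \in \bT$ and $\cF \cdot A = \cF$. Using the componentwise action given just before the statement, namely $\cF \cdot A = (\cF_1 \cdot A, \ldots, \cF_r \cdot A)$, this equality of $r$-tuples in $\cG_q(t_1,n) \times \cdots \times \cG_q(t_r,n)$ is equivalent to the coordinatewise equalities $\cF_i \cdot A = \cF_i$ for every $i = 1, \ldots, r$.

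Each of these coordinatewise equalities, together with $A \in \bT$, is precisely the condition $A \in \stab_\bT(\cF_i)$. Hence $A$ stabilizes the flag $\cF$ if and only if $A$ belongs to every $\stab_\bT(\cF_i)$, that is, if and only if $A \in \bigcap_{i=1}^r \stab_\bT(\cF_i)$. This gives the double inclusion at once, so the proof is complete.

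There is no real obstacle here: both inclusions are immediate from the definition of the action and from the fact that tuples are equal exactly when they are equal in each coordinate. The only thing worth emphasizing, for use in later sections, is that the containment $\stab_\bT(\cF) \subseteq \stab_\bT(\cF_i)$ holds for \emph{each} individual $i$, so in particular $|\stab_\bT(\cF)|$ divides $|\stab_\bT(\cF_i)|$ for every $i$; this is the fact that will later connect the size of an orbit flag code to the sizes of its projected orbit codes.
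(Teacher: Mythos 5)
Your proof is correct and is exactly the argument the paper has in mind; the paper simply declares the proof straightforward and omits it, and your chain of equivalences from the componentwise definition of the action is the canonical way to fill it in.
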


Remember that we say that a flag code is disjoint if its cardinality coincides with the ones of its projected codes. Since in the orbital scenario the cardinality of a code is extremely related with the order of the stabilizer subgroup, the previous lemma allows us to establish a stronger equivalence for disjointness. The proof is clear and we omit it. 

\begin{proposition}\label{prop:disjorb}
Given a flag $\cF=(\cF_1, \ldots, \cF_r)$ of type $(t_1, \ldots, t_r)$ on $\bbF_q^n$ and a
subgroup $\bT$ of $GL(n,q)$, the following statements are equivalent:
\begin{enumerate}[a)]
    \item $\orb_\bT(\cF)$ is a disjoint flag code. 
    \item $\stab_\bT(\cF)=\stab_\bT(\cF_1)=\cdots= \stab_\bT(\cF_r)$. 
     \item $\stab_\bT(\cF_1)=\cdots= \stab_\bT(\cF_r)$. 
\end{enumerate}
\end{proposition}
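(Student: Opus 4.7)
The plan is to show the cyclic chain of implications a) $\Rightarrow$ b) $\Rightarrow$ c) $\Rightarrow$ b) (the last giving also b) $\Rightarrow$ a) essentially for free), using the formula for the cardinality of an orbit together with Lemma \ref{lemma: estabilizador flag}.

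First I would record two general inequalities that hold for any flag $\cF$ and any subgroup $\bT \leq GL(n,q)$. From Lemma \ref{lemma: estabilizador flag} we have $\stab_\bT(\cF)\subseteq \stab_\bT(\cF_i)$ for every $i$, which by Lagrange gives $|\stab_\bT(\cF)|\leq |\stab_\bT(\cF_i)|$. By the orbit-stabilizer formula this translates to $|\orb_\bT(\cF_i)|\leq |\orb_\bT(\cF)|$ for each $i$. On the other hand the $i$-th projection is a surjective map $\orb_\bT(\cF)\twoheadrightarrow \orb_\bT(\cF_i)$, so the reverse inequality also holds trivially for the projected codes; combining both, equality of orbit sizes at every level is equivalent to equality of stabilizer orders at every level.

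For a) $\Rightarrow$ b), disjointness is the equality $|\orb_\bT(\cF)|=|\orb_\bT(\cF_i)|$ for all $i$, which by the previous paragraph forces $|\stab_\bT(\cF)|=|\stab_\bT(\cF_i)|$, and then the inclusion $\stab_\bT(\cF)\subseteq \stab_\bT(\cF_i)$ upgrades to an equality of subgroups. The implication b) $\Rightarrow$ c) is trivial, so the remaining direction c) $\Rightarrow$ b) follows directly from Lemma \ref{lemma: estabilizador flag}: if all the $\stab_\bT(\cF_i)$ coincide with a common subgroup $\bS$, then $\stab_\bT(\cF)=\bigcap_i \stab_\bT(\cF_i)=\bS$ as well. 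Finally, to close the loop b) $\Rightarrow$ a), equality of all stabilizers yields equality of all orbit sizes, which is precisely disjointness.

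I do not expect a genuine obstacle here: the argument is essentially a direct unpacking of the orbit-stabilizer formula together with Lemma \ref{lemma: estabilizador flag}. The only point worth stating carefully is the upgrade from equality of orders $|\stab_\bT(\cF)|=|\stab_\bT(\cF_i)|$ to equality of subgroups, which is immediate because $\stab_\bT(\cF)$ is already contained in $\stab_\bT(\cF_i)$.
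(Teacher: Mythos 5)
Your argument is correct and is the natural unpacking that the paper omits as "clear": orbit--stabilizer gives $|\orb_\bT(\cF_i)|=|\orb_\bT(\cF)|$ if and only if $|\stab_\bT(\cF_i)|=|\stab_\bT(\cF)|$, and the inclusion $\stab_\bT(\cF)\subseteq\stab_\bT(\cF_i)$ from Lemma \ref{lemma: estabilizador flag} upgrades equality of orders to equality of subgroups, while c) $\Rightarrow$ b) follows from $\stab_\bT(\cF)=\bigcap_i\stab_\bT(\cF_i)$. One small expository slip: the surjection $\orb_\bT(\cF)\twoheadrightarrow\orb_\bT(\cF_i)$ gives $|\orb_\bT(\cF_i)|\leq|\orb_\bT(\cF)|$, which is the \emph{same} inequality you already obtained from the stabilizer inclusion, not its reverse --- but this sentence is not actually needed, since orbit--stabilizer already converts equality of orbit sizes directly into equality of stabilizer orders.
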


As stated in Section \ref{subsec:orbitcodes}, the orbit of $\cU_k=\rsp(I_k | 0_k)$ under the action of the group $\bG$ defined in (\ref{def: Grupo grande}) provides the planar spread $\cS$ of $\bbF_q^{2k}$ (see (\ref{def: spread Manganiello orbital})). Our intention is to study the extension of this action to full flags on $\bbF_q^{2k}$ in order to construct optimum distance orbit full flag codes with $\cS$ as a projected code. Notice that, by means of Theorem \ref{teo:odfc_maxsize}, such codes would attain the maximum possible size as well. 
On the other hand, optimum distance flag codes are, in particular, disjoint (see Theorem \ref{teo:carac_odfc}). Hence, according to Proposition \ref{prop:disjorb}, we would need $\bG$ to provide equal stabilizer subgroups when acting on every subspace of a suitable flag.  Taking into account that the spread $\cS$ is a subspace code with maximum distance, the next result gives us a necessary condition for our purpose valid for any subgroup of $GL(2k,q)$.

\begin{proposition}\label{prop:contenido}
Let $\cF=(\cF_1, \ldots, \cF_{2k-1})$ be a full flag on $\bbF_q^{2k}$ and $\bT$
a subgroup of $GL(2k,q)$ such that $\orb_{\bT}(\cF_k)$ has maximum distance. Then 
$\stab_{\bT}(\cF_i)\subseteq \stab_{\bT}(\cF_k)$, for all $i\in\{1,\ldots, 2k-1\}$.
\end{proposition}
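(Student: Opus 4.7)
The plan is to use the dichotomy forced by the maximum-distance hypothesis: since $\orb_\bT(\cF_k)\subseteq \cG_q(k,2k)$ has distance equal to the upper bound $\min\{2k,2(2k-k)\}=2k$, any two distinct subspaces in this orbit must intersect trivially. Hence, for every $A\in\bT$, exactly one of the following holds: either $\cF_k\cdot A=\cF_k$, or $\cF_k\cap(\cF_k\cdot A)=\{0\}$. I would fix $A\in\stab_{\bT}(\cF_i)$ and aim to rule out the second alternative by producing a non-trivial element in $\cF_k\cap(\cF_k\cdot A)$, so that the first alternative holds and $A\in\stab_{\bT}(\cF_k)$.

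I would split into two cases, according to whether the index $i$ is smaller or larger than $k$. If $i<k$, then $\cF_i\subseteq\cF_k$ and, since $A$ stabilizes $\cF_i$, we get $\cF_i=\cF_i\cdot A\subseteq \cF_k\cdot A$; therefore $\cF_i\subseteq \cF_k\cap(\cF_k\cdot A)$. Because $\dim \cF_i=i\geq 1$, the intersection is non-trivial, which forces $\cF_k\cdot A=\cF_k$.

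If $i>k$, then $\cF_k\subseteq\cF_i$ and, again because $A$ fixes $\cF_i$, we have $\cF_k\cdot A\subseteq \cF_i\cdot A=\cF_i$. Thus both $\cF_k$ and $\cF_k\cdot A$ are $k$-dimensional subspaces sitting inside $\cF_i$, which has dimension at most $2k-1$. Applying Grassmann's formula,
\[
\dim(\cF_k\cap(\cF_k\cdot A))= 2k-\dim(\cF_k+\cF_k\cdot A)\geq 2k-(2k-1)=1,
\]
so the intersection is again non-trivial, and the maximum-distance dichotomy gives $\cF_k\cdot A=\cF_k$.

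There is no real obstacle here; the only thing to be careful about is invoking the right form of the maximum-distance condition for $\cG_q(k,2k)$ (trivial intersection of distinct codewords) and handling the two cases $i<k$ and $i>k$ separately, since in one case we use the chain $\cF_i\subseteq\cF_k$ directly and in the other we rely on a dimension count inside the ambient flag subspace $\cF_i$.
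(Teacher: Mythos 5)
Your proof is correct and follows essentially the same route as the paper: fix $A$ stabilizing $\cF_i$, use the maximum-distance dichotomy ($\cF_k\cdot A=\cF_k$ or $\cF_k\cap\cF_k\cdot A=\{0\}$), and split into the cases $i\leq k$ and $i>k$ to exhibit a non-trivial intersection. The only cosmetic difference is in the second case, where you run a Grassmann dimension count inside $\cF_i$ while the paper argues by contradiction that $\cF_i$ would otherwise contain $\cF_k\oplus\cF_k\cdot A=\bbF_q^{2k}$; these are the same observation phrased two ways.
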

\begin{proof}
Consider $A\in\bT$ such that $\cF_i = \cF_i \cdot A$ for some $i\in\{1,\ldots, 2k-1\}$. Since $\cF_k,\,   \cF_k \cdot A\in\orb_{\bT}(\cF_k)$, we know that  $d_S(\cF_k, \cF_k \cdot A)=2k$, whenever $\cF_k\neq \cF_k \cdot A$. Then,  either $\cF_k\cap \cF_k \cdot A=\{0\}$ or $\cF_k=\cF_k\cdot A$. Hence,
 \begin{itemize}
    \item if $1\leq i\leq k$, we obtain that $\{0\}\neq \cF_i = \cF_i \cap\cF_i \cdot A\subseteq \cF_k\cap \cF_k \cdot A$. Thus, $\cF_k=\cF_k \cdot A$.    
    \item If $k < i \leq 2k-1$,  the $i$-dimensional subspace $\cF_i = \cF_i \cdot A$ contains both  $\cF_k$ and $\cF_k \cdot A$ and hence, $\cF_k=\cF_k \cdot A$. Otherwise, $\cF_i $ would contain the whole space $\bbF_q^{2k}=\cF_k\oplus\cF_k \cdot A$, which is not possible. 
\end{itemize}
Therefore,  $\stab_{\bT}(\cF_i) \subseteq \stab_{\bT}(\cF_k)$ for all $i=1, \ldots, 2k-1$.
\end{proof}

As a consequence of the previous results, we can enunciate a sufficient condition to build optimum distance orbit full flag codes on $\bbF_q^{2k}$ from constant dimension codes in $\cG(k,2k)$ attaining the maximum distance.

\begin{proposition}\label{teo:ODFCorbital}
Let $\cF=(\cF_1, \ldots, \cF_{2k-1})$ be a full flag on $\bbF_q^{2k}$ and $\bT$
a subgroup of $GL(2k,q)$ such that $\orb_{\bT}(\cF_k)$ has maximum distance. 

If $\stab_{\bT}(\cF_k)\subseteq \stab_{\bT}(\cF_i)$, for all $i\in\{1,\ldots, 2k-1\}$, then $\orb_{\bT}(\cF)$ is an optimum distance full flag code of cardinality $|\orb_{\mathbf{T}}(\cF_k)|$. 
\end{proposition}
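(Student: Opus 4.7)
The plan is to reduce the statement to the characterization of optimum distance flag codes provided by Theorem \ref{teo:carac_odfc}, so I would aim to verify its two equivalent conditions, namely, that $\orb_{\bT}(\cF)$ is disjoint and that each of its projected codes attains the maximum possible distance.

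First, I would combine the hypothesis $\stab_{\bT}(\cF_k)\subseteq \stab_{\bT}(\cF_i)$ with the reverse inclusion given by Proposition \ref{prop:contenido}, which applies because $\orb_{\bT}(\cF_k)$ is assumed to have maximum distance. This yields the chain of equalities
\[
\stab_{\bT}(\cF_1)=\cdots=\stab_{\bT}(\cF_{2k-1})=\stab_{\bT}(\cF_k).
\]
Proposition \ref{prop:disjorb} then tells us that the orbit flag code $\orb_{\bT}(\cF)$ is disjoint, and Lemma \ref{lemma: estabilizador flag} together with the orbit--stabilizer formula gives the asserted cardinality
\[
|\orb_{\bT}(\cF)|=\frac{|\bT|}{|\stab_{\bT}(\cF)|}=\frac{|\bT|}{|\stab_{\bT}(\cF_k)|}=|\orb_{\bT}(\cF_k)|.
\]

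Next, I would show that every projected code $\orb_{\bT}(\cF_i)$ attains the maximum distance in $\cG_q(i,2k)$. To this end, pick $A,B\in\bT$ with $\cF_i\cdot A\neq \cF_i\cdot B$; the equality of stabilizers ensures that $\cF_k\cdot A\neq \cF_k\cdot B$ as well. Since $\orb_{\bT}(\cF_k)$ has distance $2k$, we get $\cF_k\cdot A\oplus\cF_k\cdot B=\bbF_q^{2k}$. If $i<k$, the inclusions $\cF_i\cdot A\subseteq\cF_k\cdot A$ and $\cF_i\cdot B\subseteq \cF_k\cdot B$ force $\cF_i\cdot A\cap\cF_i\cdot B=\{0\}$, so $d_S(\cF_i\cdot A,\cF_i\cdot B)=2i$. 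Dually, for $i>k$ the inclusions $\cF_k\cdot A\subseteq \cF_i\cdot A$ and $\cF_k\cdot B\subseteq \cF_i\cdot B$ give $\cF_i\cdot A+\cF_i\cdot B=\bbF_q^{2k}$, and Grassmann's formula produces $d_S(\cF_i\cdot A,\cF_i\cdot B)=2(2k-i)$. In either case the distance coincides with $\min\{2i,2(2k-i)\}$, which is the maximum possible value in $\cG_q(i,2k)$.

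With disjointness and the maximality of the distance at every level secured, Theorem \ref{teo:carac_odfc} immediately yields that $\orb_{\bT}(\cF)$ is an optimum distance full flag code. I do not anticipate a serious obstacle here; the argument is essentially a bookkeeping exercise that chains Proposition \ref{prop:contenido}, Proposition \ref{prop:disjorb} and Theorem \ref{teo:carac_odfc}. The only genuinely computational step—showing that the intermediate projected codes reach the maximum distance for their dimension—is made straightforward by the fact that two distinct $k$-th components of codewords already span the whole ambient space $\bbF_q^{2k}$.
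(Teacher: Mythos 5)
Your proof is correct and follows essentially the same route as the paper: combine the hypothesis with Proposition \ref{prop:contenido} to get equal stabilizers, invoke Proposition \ref{prop:disjorb} for disjointness, verify maximal distance at each level using that distinct $k$-th components span $\bbF_q^{2k}$, and conclude with Theorem \ref{teo:carac_odfc}. The only cosmetic difference is that you argue the distance step directly while the paper phrases it as a proof by contradiction; the underlying idea is identical.
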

\begin{proof}
If $\stab_{\bT}(\cF_k)\subseteq \stab_{\bT}(\cF_i)$, for all $i\in\{1,\ldots, 2k-1\}$, then $\bT$ provides equal stabilizers when acting on all subspaces of $\cF$, by Proposition \ref{prop:contenido}. Hence, by means of Proposition \ref{prop:disjorb}, we obtain that $\orb_{\bT}(\cF)$ is a disjoint full flag code.

Now, we will show that $\orb_{\bT}(\cF)_i=\orb_{\bT}(\cF_i)$ has the maximum possible distance, that is, 
\[
d_S(\orb_{\bT}(\cF_i)) =
\left\lbrace
\begin{array}{lll}
2i & \text{if} & i\leq k,\\
2(2k-i) & \text{if} & i > k.
\end{array}
\right.
\]
Arguing by contradiction, suppose that for some dimension $1\leq i\leq 2k-1$, the maximum possible distance is not attained. Hence, there must exist a matrix $A \in \bT \setminus \stab_{\bT}(\cF_i)$ such that 
\[
\left\lbrace
\begin{array}{lll}
\dim(\cF_i \cap \cF_i  \cdot A) > 0 & \text{if} & i\leq k,\\
\dim(\cF_i \cap \cF_i  \cdot A) > 2(i-k) & \text{if} & i > k.
\end{array}
\right.
\]
In any case, $A$ does not stabilize $\cF_k$,  since $\stab_{\bT}(\cF_k)= \stab_{\bT}(\cF_i)$. Thus, $d_S(\cF_k, \cF_k \cdot A)=d_S(\orb_{\bT}(\cF_k))=2k$. This implies that 
 $\cF_k\cap\cF_k \cdot A =\{0\}$ and $\cF_k\oplus\cF_k \cdot A=\bbF_q^{2k}$. Therefore,
\begin{itemize}
    \item If $i\in \{1,\ldots k\}$, then $ 0 < \dim(\cF_i \cap \cF_i \cdot A) \leq \dim(\cF_k\cap\cF_k \cdot A)=0$.
    \item If $i\in \{k,\ldots 2k-1\}$, then $\bbF_q^{2k}=\cF_k\oplus\cF_k \cdot A\subseteq \cF_i+\cF_i \cdot A$ and thus  $\dim(\cF_i \cap \cF_i  \cdot A)=2(i-k)$.
\end{itemize}
Both cases lead us to a contradiction. We conclude that  $\orb_{\bT}(\cF_i)$ must have the maximum possible distance for all dimensions $i=1,\ldots, 2k-1,$ as stated. 

Consequently, $\orb_{\bT}(\cF)$ is an optimum distance full flag code by Theorem \ref{teo:carac_odfc}.
\end{proof}

The following result is a particular case of the previous proposition. 

\begin{corollary}\label{cor:ODFCorbital}
Let $\cF=(\cF_1, \ldots, \cF_{2k-1})$ be a full flag on $\bbF_q^{2k}$ and $\bT$
a subgroup of $GL(2k,q)$ such that $\orb_{\bT}(\cF_k)$ has maximum distance. 

If $\stab_{\bT}(\cF_k)=\{I_{2k}\}$, then $\orb_{\bT}(\cF)$ is an optimum distance full flag code. 
\end{corollary}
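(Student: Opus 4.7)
The plan is to invoke Proposition \ref{teo:ODFCorbital} directly. The corollary is the degenerate case in which the hypothesis of containment of stabilizers becomes vacuous, because the trivial subgroup is contained in every subgroup.

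First I would observe that for any full flag $\cF = (\cF_1, \ldots, \cF_{2k-1})$ on $\bbF_q^{2k}$ and any subgroup $\bT$ of $GL(2k,q)$, each stabilizer subgroup $\stab_{\bT}(\cF_i)$ is, by definition, a subgroup of $\bT$ and therefore contains the identity matrix $I_{2k}$. Consequently, if $\stab_{\bT}(\cF_k) = \{I_{2k}\}$, then $\{I_{2k}\} = \stab_{\bT}(\cF_k) \subseteq \stab_{\bT}(\cF_i)$ for every $i \in \{1, \ldots, 2k-1\}$.

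Once this containment is established, the hypothesis of Proposition \ref{teo:ODFCorbital} is satisfied: we have a full flag $\cF$ on $\bbF_q^{2k}$ and a subgroup $\bT$ of $GL(2k,q)$ for which $\orb_{\bT}(\cF_k)$ has maximum distance and $\stab_{\bT}(\cF_k) \subseteq \stab_{\bT}(\cF_i)$ for all $i$. Applying Proposition \ref{teo:ODFCorbital}, we conclude that $\orb_{\bT}(\cF)$ is an optimum distance full flag code, with cardinality $|\orb_{\bT}(\cF_k)|$.

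There is essentially no obstacle here: the corollary is a one-line deduction from Proposition \ref{teo:ODFCorbital}, isolating the easily checkable sufficient condition that the stabilizer of the middle subspace be trivial. It is worth highlighting because this hypothesis is often simpler to verify in concrete constructions than the chain of containments $\stab_{\bT}(\cF_k)\subseteq \stab_{\bT}(\cF_i)$, and it will be precisely the criterion used in the constructions of Section \ref{sec:construc}.
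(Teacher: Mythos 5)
Your proof is correct and follows exactly the route the paper intends: the paper states the corollary as ``a particular case of the previous proposition,'' and you make that explicit by observing that the trivial subgroup $\{I_{2k}\}$ is contained in every $\stab_{\bT}(\cF_i)$, so the containment hypothesis of Proposition~\ref{teo:ODFCorbital} holds automatically. Nothing is missing.
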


From Proposition \ref{teo:ODFCorbital}, we have that, under some condition on the stabilizers, if we start from an orbit subspace code of dimension $k$ and maximum distance, the same action induces an optimum distance orbit full flag code with the same cardinality. This result is also true if we place as a starting code a constant dimension code of maximum distance  consisting of more than one orbit. In other words, the orbital structure, the cardinality and the property of attaining the maximum distance of  the starting subspace code of dimension $k$ will be directly translated to the flag code level. We denote by $\dot \cup$ the disjoint union of sets.

\begin{theorem}\label{cor:union_ODFC}
Let $\{ \cF^{j}=(\cF^j_1, \ldots, \cF^j_{2k-1})\}_{j=1}^m$ be a set of full flags on $\bbF_q^{2k}$ and $\bT$ a subgroup of $GL(2k,q)$ such that  the subspaces $\cF^1_k,\ldots ,\cF^m_k$ lie in different orbits of the action of $\bT$ on $\cG_q(k,2k)$ and the union $\dot\cup_{j=1}^m \orb_{\bT}(\cF^j_k)$ is a constant dimension code of maximum distance. 

If $\stab_{\bT}(\cF^j_k)\subseteq \stab_{\bT}(\cF^j_i)$, for all $i\in\{1,\ldots, 2k-1\}$ and for all $j\in\{1,\ldots, m\}$, then the union $\dot \cup_{j=1}^m \orb_{\bT}(\cF^j)$ is an optimum distance full flag code with size $\sum_{j=1}^m |\orb_{\mathbf{T}}(\cF_k^j)|$. 
\end{theorem}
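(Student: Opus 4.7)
The plan is to reduce everything to Theorem \ref{teo:carac_odfc} by showing that $\cC := \dot\cup_{j=1}^m \orb_{\bT}(\cF^j)$ is a disjoint full flag code whose projected codes all attain the maximum possible subspace distance.

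First, since $\dot\cup_{j=1}^m \orb_{\bT}(\cF^j_k)$ has maximum distance, each individual orbit $\orb_{\bT}(\cF^j_k)$ does as well. Hence Proposition \ref{teo:ODFCorbital} applies to every $\cF^j$ separately: each $\orb_{\bT}(\cF^j)$ is an optimum distance full flag code and $|\orb_{\bT}(\cF^j_i)| = |\orb_{\bT}(\cF^j_k)|$ for all $i$, with each $\orb_{\bT}(\cF^j_i)$ already attaining its maximum possible subspace distance. In particular, the flag orbits are pairwise disjoint (since their $k$-th subspaces live in distinct $\bT$-orbits), so $|\cC| = \sum_{j=1}^m |\orb_{\bT}(\cF^j_k)|$.

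Second, I would show that for every $i\in\{1,\ldots,2k-1\}$, the orbits $\orb_{\bT}(\cF^j_i)$ with varying $j$ remain pairwise disjoint, which — together with the size equality just established — yields $|\cC_i| = |\cC|$, i.e.\ disjointness in the sense of Theorem \ref{teo:carac_odfc}. Suppose for contradiction that $\cF^{j_1}_i \cdot A = \cF^{j_2}_i \cdot B$ for some $j_1 \neq j_2$ and $A,B\in\bT$. If $i \leq k$, this nonzero subspace is contained in both $\cF^{j_1}_k \cdot A$ and $\cF^{j_2}_k \cdot B$, contradicting the trivial intersection forced by the maximum distance of $\dot\cup_j \orb_{\bT}(\cF^j_k)$. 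If $i > k$, the $i$-dimensional subspace $\cF^{j_1}_i \cdot A$ contains both $\cF^{j_1}_k \cdot A$ and $\cF^{j_2}_k \cdot B$, whose sum is the whole $\bbF_q^{2k}$, forcing $i \geq 2k$, which is impossible.

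Third, the very same dichotomy handles cross-orbit distances: for $j_1\neq j_2$ and arbitrary $A,B\in\bT$, one has $\cF^{j_1}_k \cdot A \cap \cF^{j_2}_k \cdot B = \{0\}$ and $\cF^{j_1}_k \cdot A + \cF^{j_2}_k \cdot B = \bbF_q^{2k}$, from which the chain conditions give $d_S(\cF^{j_1}_i \cdot A, \cF^{j_2}_i \cdot B) = 2i$ when $i\leq k$ and $2(2k-i)$ when $i > k$ — the maximum in both regimes. Combined with the within-orbit maximality noted in the first step, this proves that each $\cC_i$ has maximum possible subspace distance, and Theorem \ref{teo:carac_odfc} delivers the conclusion.

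The main obstacle I anticipate is the bookkeeping in the second and third steps: one must split cleanly into the cases $i\leq k$ and $i>k$ and, in each, exploit simultaneously the trivial-intersection and full-sum properties of a maximum-distance $k$-dimensional code in $\bbF_q^{2k}$.
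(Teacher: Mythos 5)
Your proof is correct and rests on the same two pillars as the paper's: apply Proposition \ref{teo:ODFCorbital} orbit-by-orbit, then exploit the dichotomy $\cF_i\cap\cF'_i\subseteq\cF_k\cap\cF'_k=\{0\}$ for $i\leq k$ versus $\bbF_q^{2k}=\cF_k\oplus\cF'_k\subseteq\cF_i+\cF'_i$ for $i>k$ on cross-orbit pairs. The only presentational difference is the route to the conclusion: you verify disjointness of each $\cC_i$ and the maximality of each $d_S(\cC_i)$ separately and then invoke Theorem \ref{teo:carac_odfc}, whereas the paper computes the size directly and verifies that every cross-orbit pair of flags attains the maximum \emph{flag} distance, making the appeal to the characterization theorem implicit rather than explicit. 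Since both of your steps (disjointness of the $\cC_i$ and maximality of cross-orbit subspace distances) reduce to the identical inclusion argument, the two write-ups are essentially the same proof; yours is slightly more modular at the cost of restating the dichotomy twice.
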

\begin{proof}
 By Proposition \ref{teo:ODFCorbital}, we know that  $\orb_{\bT}(\cF^j)$ is an optimum distance full flag code, for $j=1,\ldots, m$. Moreover, since the subspaces $\cF_k^1, \ldots, \cF_k^m$ lie in different orbits,  there is no common subspace in different constant dimension codes $\orb_{\mathbf{T}}(\cF_k^j)$. Therefore, the union $\cup_{j=1}^m \orb_{\mathbf{T}}(\cF^j)$ contains exactly 
\[
\sum_{j=1}^m |\orb_{\mathbf{T}}(\cF^j) |= \sum_{j=1}^m |\orb_{\mathbf{T}}(\cF_k^j)|
\]
flags. Hence, it suffices to show that, given $A_1, \, A_2\in \bT$ and $j_1,j_2\in \{1,\ldots, m\}$ with $j_1\neq j_2$, the distance between the flags $\cF^{j_1} \cdot A_1$ and $\cF^{j_2}\cdot A_2$ is maximum.

Notice that $\cF^{j_1}_k \cdot A_1\in \orb_{\bT}(\cF^{j_1}_k)$ and $\cF^{j_2}_k \cdot A_2 \in \orb_{\bT}(\cF^{j_2}_k)$. Hence $\cF^{j_1}_k \cdot A_1\neq \cF^{j_2}_k \cdot A_2$ and $d_S(\cF^{j_1}_k\cdot A_1, \cF^{j_2}_k\cdot A_2)=2k$, by hypothesis. In other words, $\cF^{j_1}_k \cdot A_1 \cap \cF^{j_2}_k \cdot A_2= \{ 0\}$ and $\cF^{j_1}_k \cdot A_1 \oplus \cF^{j_2}_k\cdot A_2 = \bbF_q^{2k}$. Then, it holds
$$
 \cF^{j_1}_i \cdot A_1 \cap \cF^{j_2}_i \cdot A_2   \subseteq  \cF^{j_1}_k \cdot A_1 \cap \cF^{j_2}_k \cdot A_2= \{0\},  \  \mbox{ for } 1\leq i\leq k,
$$
and
$$
\bbF_q^{2k} = \cF^{j_1}_k \cdot A_1 \oplus \cF^{j_2}_k \cdot A_2  \subseteq  \cF^{j_1}_i \cdot A_1 + \cF^{j_2}_i \cdot A_2, \   \mbox{ for }  k+1\leq i\leq 2k-1. 
$$
In both cases, $\cF^{j_1}_i \cdot A_1 \cap \cF^{j_2}_i\cdot A_2$ has the minimum possible dimension. It follows that $d_S(\cF^{j_1}_i \cdot A_1 ,\cF^{j_2}_i \cdot A_2)$ is the maximum possible distance,  for all $i\in\{1,\ldots, 2k-1\}$. Therefore, $d_f(\cF^{j_1} \cdot A_1, \cF^{j_2}\cdot A_2)$ is the maximum possible distance between full flags on $\bbF_q^{2k}$, which finishes the proof.
\end{proof}

The previous result allows us to build optimum distance full flag codes on $\bbF_q^{2k}$ of higher cardinality as union of orbits under the action of the same group. Recall that, by means of Theorem \ref{teo:odfc_maxsize}, optimum distance full flag codes on $\bbF_q^{2k}$ with the maximum possible size are the ones with a $k$-spread as a projected code. Hence, given the acting group $\bT$, the number of orbits needed to get full flag codes attaining both the maximum distance and cardinality coincides  with the number of orbits in which $\bT$ partitions the planar spread we use as a starting $k$-projected code.

Coming back to the action of the group $\bG$ defined in (\ref{def: Grupo grande}), notice that if $\cF=(\cF_1, \ldots, \cF_{2k-1})$ is a full flag on $\bbF_q^{2k}$ with $\cF_k\in\cS$, where $\cS$ is the planar spread defined in (\ref{def: spread Manganiello}), the group $\bG$ satisfies the first hypothesis of Proposition \ref{teo:ODFCorbital}, since $\orb_{\bG}(\cF_k)=\cS$. Hence, we just need $\stab_{\bG}(\cF_k)$ to be a subgroup of any other $\stab_{\bG}(\cF_i)$ in order to have that $\orb_{\bG}(\cF)$ is an optimum distance full flag code. However, the next example shows that $\bG$ does not give equal stabilizers in general. 

\begin{example}\label{ex:nodisjoint}
Denote by $\cF=(\cF_1, \cF_2, \cF_3)$ the standard full flag on $\bbF_2^4$, that is, the $i$-th subspace  $\cF_i$ of $\cF$ is spanned by the first $i$ rows of the $4\times 4$ identity matrix. 
Let $P$ be the companion matrix of the primitive polynomial $p(x)=x^2+x+1 \in \bbF_2\left[ x \right]$
and consider the matrix group $\bG\leq GL(4,2)$ defined in (\ref{def: Grupo grande}).
The orbit full flag code  $\orb_\bG(\cF)$ has projected codes
$\orb_\bG(\cF)_i=\orb_\bG(\cF_i)$, for $i=1,2,3$. In particular, since $\cF_2=\rsp(I_2  \ | \ 0_2)=\cU_2$, it follows by (\ref{def: spread Manganiello orbital}) that $\orb_\bG(\cF_2)=\cS$ is the planar spread given in (\ref{def: spread Manganiello}). Moreover, $\orb_\bG(\cF_1)$ and $\orb_\bG(\cF_3)$ are collections of lines and hyperplanes of $\bbF_2^4$ respectively. Hence, all  the projected codes of $\orb_\bG(\cF)$ attain the maximum possible distance. However, $\orb_{\mathbf{G}}(\cF)$ is not an optimum distance full flag code. It suffices to consider the following matrix $A\in \bG$:
\[
A=
\begin{pmatrix}
P & 0_2 \\
0_2 & P^2
\end{pmatrix}=
\begin{pmatrix}
0_2 & I_2 \\
I_2 & 0_2
\end{pmatrix}
\begin{pmatrix}
I_2 & P^2 \\
0_2 & I_2
\end{pmatrix}
\begin{pmatrix}
0_2 & I_2 \\
I_2 & 0_2
\end{pmatrix}
\begin{pmatrix}
I_2 & P \\
0_2 & I_2
\end{pmatrix}
\begin{pmatrix}
0_2 & I_2 \\
I_2 & 0_2
\end{pmatrix}
\begin{pmatrix}
I_2 & P^2 \\
0_2 & I_2
\end{pmatrix}
\]
and the flag $\cF \cdot A \in \orb_\bG(\cF)$. One has
\[
\begin{array}{ccccc}
\cF_1  \cdot A & = & \rsp\begin{pmatrix} 
                            0 & 1 & 0 &  0 
                             \end{pmatrix}  & \neq & \cF_1,\\
\cF_2  \cdot A & = & \rsp\begin{pmatrix}
                               0 & 1 & 0 &  0 \\
                                1 & 1 & 0 & 0
                              \end{pmatrix}  & = & \cF_2,\\
\end{array}
\]
Then $A\in \stab_{\bG}(\cF_2)\setminus \stab_{\bG}(\cF_1)$. This shows that the flag code $\orb_\bG(\cF)$ is not disjoint by Proposition \ref{prop:disjorb}. Consequently $\orb_\bG(\cF)$ is not an optimum distance flag code. 
\end{example}

\begin{remark}
In the light of Corollary \ref{cor:ODFCorbital}, we see that, given a full flag $\cF=(\cF_1, \ldots, \cF_{2k-1})$ on $\bbF_q^{2k}$ with $\cF_k\in\cS$ and a subgroup $\bT$ of $\bG$, if the stabilizer of $\cF_k$ under the action of $\bT$ is trivial, then $\orb_{\bT}(\cF)$ is automatically an optimum distance flag code of size $|\bT|$. Moreover, if $|\bT|=|\cS|=q^k+1$, then $\bT$ acts regularly (that is,  transitively and providing trivial stabilizers) on $\cS$ and the full flag orbit code $\orb_{\bT}(\cF)$ has $\cS$ as its $k$-projected code. Consequently, $\orb_{\bT}(\cF)$ is an optimum distance full flag code with the best size.
Nevertheless, take into account that the action of a subgroup $\bT$ of $\bG$ on $\cF$ still could give optimum distance orbit full flag codes as long as the stabilizer subgroups of all the subspaces of the flag $\cF$ coincide (Propositions \ref{prop:contenido} and \ref{teo:ODFCorbital}). Therefore, by Theorem \ref{cor:union_ODFC}, if $|\bT|=|\cS|$ but the stabilizer subgroup of $\bT$ on $\cF_k$ is not trivial,  we will need to increase the number of orbits to get an optimum distance flag code  of size  $|\cS|$.
\end{remark}

The following section is devoted to explore the existence of subgroups of $\bG$  of order $q^k+1$  whose action on a full flag $\cF=(\cF_1, \ldots, \cF_{2k-1})$ on $\bbF_q^{2k}$ with $\cF_k\in\cS$, provides disjoint full flag codes having $\cS$ as its $k$-projected code as union of the minimum number of orbits.

%%%%%%%%%%%%%%%%%%%%%%%%%%%%%%%%%

\section{An orbital construction of optimum distance flag codes}\label{sec:construc}

Let us fix $n=2k$. The purpose of this section is to provide an orbital construction of optimum distance full flag codes on $\bbF_q^{2k}$ with the best size, $q^k+1$, by using a subgroup of the group $\bG$ defined in (\ref{def: Grupo grande}). To achieve this goal, we will proceed in several steps. First, we will deepen the structure of the group $\bG$ in order to find a suitable subgroup $\bH$. Secondly, this subgroup choice will give an orbital description of the Grassmannian of lines of $\bbF_{q^k}^2$ that induces an orbital construction of the planar spread $\cS$ of $\bbF_q^{2k}$ defined in (\ref{def: spread Manganiello}). Finally, by using Propositions \ref{prop:contenido}, \ref{teo:ODFCorbital}  and Theorem \ref{cor:union_ODFC} we will prove that it is possible to extend the previous construction to flags giving rise to an orbital construction of optimum distance flag codes with the best size.

\subsection{Obtaining a suitable subgroup} 

Although $\bG$ is a subgroup of $GL(2k,q)$, the field isomorphism $\phi$ given in (\ref{eq:isoalphaP}) induces the following group monomorphism $\psi$, which allows us to work on $GL(2,q^k)$:

\begin{equation}\label{eq:hom_grasmanianas}
\begin{array}{cccc}
\psi: & GL(2,q^k)                                     & \longrightarrow                     & GL(2k,q) \\
        & \begin{pmatrix}
            a & b\\
            c & d
           \end{pmatrix}
                                                               &\longmapsto            & \begin{pmatrix}
                                                                                              \phi(a) & \phi(b)\\
                                                                                               \phi(c) & \phi(d)
                                                                                                 \end{pmatrix}.
\end{array}                                                                                                 
\end{equation}
In particular, one has that  
\[
\psi\begin{pmatrix}
0 & 1\\
1 & 0
\end{pmatrix}=
\begin{pmatrix}
0 & I_k\\
I_k & 0
\end{pmatrix}
\ \text{and} \ 
\psi\begin{pmatrix}
1 & \alpha^i\\
0 & 1
\end{pmatrix}=
\begin{pmatrix}
I_k & P^i\\
0 & I_k
\end{pmatrix},
\] for $i=0, \ldots, q^k-2$. Thus, if we consider the group
\[
\bar{\bG}=
\left\langle
\begin{pmatrix}
0 & 1\\
1 & 0
\end{pmatrix}, \ 
\begin{pmatrix}
1 & \alpha^i\\
0 & 1
\end{pmatrix} \ \Big| \ i=0, \ldots, q^k-2
\right\rangle \leq GL(2,q^k),
\]
it is clear that $\psi(\bar{\bG})=\bG$ and therefore $\bar{\bG}$ and $\bG$ are isomorphic groups. 

The following statements show that $\bar{\bG}$ is extremely related to the {\em special linear group} $SL(2, q^k)$, which consists of all matrices of $GL(2,q^k)$ with determinant $1$. 

\begin{proposition}{\cite[Proposition 2, page 57]{Alperin}}\label{prop:generadores SL}
If  $\alpha$ is a generator of the multiplicative cyclic group $\bbF_{q^k}^*$, then
\[
SL(2, q^k)=
\left\langle
\begin{pmatrix}
1 & \alpha^i \\
0 & 1
\end{pmatrix}, \ 
\begin{pmatrix}
1 & 0\\
\alpha^i & 1
\end{pmatrix} \ \Big| \ i=0, \ldots, q^k-2
\right\rangle
\]
\end{proposition}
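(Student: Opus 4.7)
Since this is a classical fact from linear group theory (indeed cited from Alperin, Prop.~2, p.~57), the plan is to follow the standard Gaussian-elimination argument. Let $H$ denote the subgroup of $SL(2,q^k)$ generated by the listed matrices. The containment $H \subseteq SL(2,q^k)$ will be immediate since each generator has determinant~$1$, so the real task will be proving the reverse inclusion.

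The first step will be to observe that the listed generators already yield every transvection of the form
\[
E_{12}(\lambda) = \begin{pmatrix} 1 & \lambda \\ 0 & 1 \end{pmatrix}, \qquad E_{21}(\mu) = \begin{pmatrix} 1 & 0 \\ \mu & 1 \end{pmatrix}, \qquad \lambda,\mu \in \bbF_{q^k}.
\]
This is because $\alpha$ generates $\bbF_{q^k}^*$, so the exponents $i = 0,\ldots,q^k-2$ exhaust the non-zero scalars; combined with $E_{12}(0) = E_{21}(0) = I_2 \in H$, this guarantees that every elementary transvection lies in $H$.

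The key step will then be to show, by row and column operations over $\bbF_{q^k}$, that these transvections generate all of $SL(2,q^k)$. Given an arbitrary $A = \begin{pmatrix} a & b \\ c & d \end{pmatrix}$ with $\det A = 1$: if $c \neq 0$, I will left-multiply $A$ by $E_{12}((1-a)/c)$ to force the $(1,1)$-entry to be $1$, then right-multiply by a suitable $E_{12}(\mu)$ to clear the $(1,2)$-entry; the determinant constraint forces the resulting matrix to equal $E_{21}(c)$, yielding an explicit factorization $A = E_{12}(\lambda_1)\,E_{21}(c)\,E_{12}(\lambda_2)$ with $\lambda_1,\lambda_2 \in \bbF_{q^k}$. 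If instead $c = 0$, the relation $ad = 1$ forces $d \neq 0$, and multiplying $A$ by $E_{21}(1)$ produces a matrix with non-zero $(2,1)$-entry, reducing to the previous case.

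I do not expect any genuine obstacle here: the only point deserving a line of care is the matching between the listed generators $\{E_{12}(\alpha^i), E_{21}(\alpha^i)\}_{i}$ and the full set of transvections, and that matching is immediate from $\bbF_{q^k}^* = \langle \alpha \rangle$. The Gaussian-elimination step is entirely routine and produces an explicit factorization of any $A \in SL(2,q^k)$ into at most four elementary transvections, completing the argument.
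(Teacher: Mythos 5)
The paper states this proposition purely as a citation from Alperin and Bell and supplies no proof of its own, so there is no internal argument to compare against. Your reconstruction is correct and is the standard one: identify the listed generators with the full set of transvections $E_{12}(\lambda)$, $E_{21}(\mu)$ (using $\bbF_{q^k}^* = \langle\alpha\rangle$ together with $E_{12}(0)=E_{21}(0)=I_2$), and then run Gaussian elimination to factor an arbitrary $A\in SL(2,q^k)$ as a product of transvections. The computation in the $c\neq 0$ case is right, and the determinant constraint does indeed force the reduced matrix to be exactly $E_{21}(c)$. The only point worth tidying in a final write-up is the $c=0$ branch: you invoke $d\neq 0$ but then say ``multiplying $A$ by $E_{21}(1)$'' without specifying the side; right-multiplication puts $d$ in the $(2,1)$-entry, while left-multiplication puts $a$ there, and since $ad=1$ both $a$ and $d$ are non-zero, so either choice works, but the step as written reads as if $d\neq 0$ is the fact being used, which only matches right-multiplication. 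This is a presentational detail, not a gap.
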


\begin{proposition}\label{prop:Estr_barG}
The group $SL(2,q^k)$ is a subgroup of $\bar{\bG}$. Moreover, if $\car(\bbF_q)=2$, then $\bar{\bG}=SL(2,q^k)$. 
In case of $\car(\bbF_q)\neq 2$, one has that $\bar{\bG}$ is the semidirect product  $\bar{\bG}=SL(2,q^k) \rtimes \left\langle\begin{pmatrix}
0 & 1\\
1 & 0
\end{pmatrix}\right\rangle$.
\end{proposition}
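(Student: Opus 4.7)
The plan is to handle both cases in a single chain of reasoning, leveraging the generating set of $SL(2,q^k)$ from Proposition \ref{prop:generadores SL}. Write $J = \begin{pmatrix} 0 & 1 \\ 1 & 0 \end{pmatrix}$, $T_i = \begin{pmatrix} 1 & \alpha^i \\ 0 & 1 \end{pmatrix}$ and $L_i = \begin{pmatrix} 1 & 0 \\ \alpha^i & 1 \end{pmatrix}$, so that $\bar{\bG} = \langle J, T_0, \ldots, T_{q^k-2}\rangle$ while $SL(2,q^k) = \langle T_i, L_i \mid 0 \leq i \leq q^k-2\rangle$. The first step is to prove $SL(2,q^k) \subseteq \bar{\bG}$ via the direct computation $J T_i J^{-1} = L_i$ (noting $J^{-1} = J$): this exhibits every lower unitriangular generator of $SL(2,q^k)$ as a conjugate, inside $\bar{\bG}$, of an upper unitriangular generator already present in $\bar{\bG}$, hence every generator from Proposition \ref{prop:generadores SL} lies in $\bar{\bG}$.

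The next step is the determinantal observation: $\det(T_i) = 1$ and $\det(J) = -1$, so every element of $\bar{\bG}$ has determinant $\pm 1$. In characteristic $2$ we have $-1 = 1$, so $J$ and all $T_i$ already lie in $SL(2,q^k)$, giving $\bar{\bG} \subseteq SL(2,q^k)$; combined with the first inclusion this establishes the second assertion $\bar{\bG} = SL(2,q^k)$.

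Finally, for $\car(\bbF_q) \neq 2$ the matrix $J$ has determinant $-1 \neq 1$, so $\langle J \rangle = \{I, J\}$ intersects $SL(2,q^k)$ trivially. Since $SL(2,q^k)$ is the kernel of $\det\colon GL(2,q^k) \to \bbF_{q^k}^*$ it is normal in $GL(2,q^k)$, hence in $\bar{\bG}$. Any $M \in \bar{\bG}$ with $\det(M) = 1$ already lies in $SL(2,q^k) \subseteq \bar{\bG}$, while any $M$ with $\det(M) = -1$ satisfies $\det(MJ) = 1$, so $M \in SL(2,q^k) \cdot J$. Therefore $\bar{\bG} = SL(2,q^k) \cdot \langle J \rangle$, and normality together with the trivial intersection identifies $\bar{\bG}$ as the semidirect product $SL(2,q^k) \rtimes \langle J \rangle$.

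The only nontrivial ingredient is the conjugation identity $J T_i J = L_i$; everything else is routine bookkeeping with determinants, so I do not foresee any serious obstacle beyond carefully tracking which cosets the generators populate.
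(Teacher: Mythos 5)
Your proposal is correct and follows essentially the same route as the paper: both establish $SL(2,q^k)\leq\bar{\bG}$ via the conjugation identity $JT_iJ = L_i$ together with Proposition \ref{prop:generadores SL}, both settle the even case by observing $\det J=1$ when $\car(\bbF_q)=2$, and both handle the odd case by noting $J$ has order two and normalizes $SL(2,q^k)$. Your writeup is marginally more explicit on the final step (verifying $\bar{\bG}=SL(2,q^k)\cdot\langle J\rangle$ via the determinant coset argument and the trivial intersection), but the underlying argument is the same.
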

\begin{proof}
Observe that 
\[
\begin{pmatrix}
1 & 0\\
\alpha^i & 1
\end{pmatrix}
= 
\begin{pmatrix}
0 & 1\\
1 & 0
\end{pmatrix}
\begin{pmatrix}
1 & \alpha^i\\
0 & 1
\end{pmatrix}
\begin{pmatrix}
0 & 1\\
1 & 0
\end{pmatrix},
\]
for all $0\leq i\leq q^k-2$. Thus $SL(2,q^k)$ is a subgroup of $\bar{\bG}$ by Proposition \ref{prop:generadores SL}. On the other hand, it is clear that   
$\begin{pmatrix}
1 & \alpha^i\\
0& 1
\end{pmatrix} \in  SL(2,q^k)$, for all $0\leq i\leq q^k-2$. 
Moreover, if $\car(\bbF_q)=2$, then $\det \begin{pmatrix}
0 & 1\\
1 & 0
\end{pmatrix}=1$. Therefore $\bar{\bG}=SL(2,q^k)$ in this case.
If $\car(\bbF_q)\neq 2$, then the matrix $\begin{pmatrix}
0 & 1\\
1 & 0
\end{pmatrix}$ has multiplicative order two and it normalizes the subgroup $SL(2,q^k) \leq \bar{\bG}$. Hence $\bar{\bG}$ is the semidirect product of $\left\langle\begin{pmatrix}
0 & 1\\
1 & 0
\end{pmatrix}\right\rangle$ acting by conjugation on $SL(2,q^k)$.
\end{proof}

Now, recall the natural embedding $\varphi$ of the Grassmannian of lines of $\bbF_{q^k}^2$ into ${\cal G}_{q}(k,2k)$ defined in (\ref{eq:fieldreduction}). Since  $\varphi({\cal G}_{q^k}(1,2))={\cal S}$, the action of $\bar{\bG}$ on ${\cal G}_{q^k}(1,2)$ is equivalent to the action of $\bG$ on ${\cal S}$ in the following sense: given $\bar{A}\in \bar{\bG}$ and a line $l\in {\cal G}_{q^k}(1,2)$ given by $l=\rsp(x_1,x_2)$, it follows that
\begin{equation}\label{eq:accioequiv}
\varphi(l \cdot \bar{A})=\varphi(\rsp(x_1,x_2)\cdot \bar{A})=\varphi(\rsp(x_1,x_2))\cdot \psi(\bar{A})=\varphi(l)\cdot \psi(\bar{A}).
\end{equation}
This way, we can work with the action of $\bar{\bG}$ (or subgroups of $\bar{\bG}$) on ${\cal G}_{q^k}(1,2)$ by simplicity and then easily translate it to the action of $\bG$ (or subgroups of $\bG$) on ${\cal S}$ via $\varphi$. In particular,   $\varphi(\rsp(1, 0))={\cal U}_k$ and using (\ref{eq:accioequiv}) we obtain
\begin{eqnarray}\label{eq:Svarphi}
{\cal S}= \orb_{\bG}({\cal U}_k)  & = & \{{\cal U}_k\cdot A\ |\ A\in \bG\} \nonumber \\
                                               & = &  \{\varphi(\rsp(1, 0))\cdot  \psi(\bar{A})\ |\ \bar{A}\in \bar{\bG} \} \nonumber \\
                                               & = & \{\varphi(\rsp(1, 0)\cdot \bar{A})\ |\ \bar{A}\in \bar{\bG} \} \nonumber \\
                                               & = & \varphi(\orb_{\bar{\bG}}(\rsp(1, 0))). \nonumber
\end{eqnarray}
Consequently, 
\begin{equation}\label{eq:orbrectes}
{\cal G}_{q^k}(1,2)=\orb_{\bar{\bG}}(\rsp(1, 0))
\end{equation} 
since $\varphi$ is injective. Thus, given that $\bar{\bG}$ acts transitively on ${\cal G}_{q^k}(1,2)$, we wonder if there is any subgroup of $\bar{\bG}$ acting regularly on ${\cal G}_{q^k}(1,2)$. To this end, we study first the existence of subgroups of order  $q^k+1$ in $\bar{\bG}$.

\begin{remark}
Observe that we can restrict our study to subgroups of $SL(2,q^k)$. In case the characteristic is even, it is clear by Proposition \ref{prop:Estr_barG}. On the other hand, for odd characteristic, notice that the matrix $\begin{pmatrix}
0 & 1\\
-1 & 0
\end{pmatrix}$ has determinant equal to one. Hence, 
\begin{eqnarray}
\cG_{q^k}(1,2) & = & \{\rsp(0,1),\rsp(1,0),\rsp(1,\alpha^i) \ | \  i=1,\ldots q^k-2\} \nonumber \\
                         & = &  \orb_{SL(2,q^k)}(\rsp(1, 0))\nonumber                      
\end{eqnarray}
in both cases.
\end{remark}

At this point, we take into account the existence of {\em Singer subgroups} of $SL(2,q^k)$ whose order is precisely $q^k+1$. Singer subgroups of classical linear groups are well-known in finite group theory (see \cite{Hes, HupI} for instance).  In general, a Singer subgroup of $GL(n,q)$ is a cyclic subgroup of order $q^{n}-1$ generated by a {\em Singer cycle} or {\em cyclic projectivity} (see \cite{Hirschfeld98}). In our context and from \cite{HupI}, Singer subgroups of $GL(2,q^k)$ can be constructed in the following way. Let us consider a primitive element $\omega\in \bbF_{q^{2k}}$ and the companion matrix $M \in GL(2, q^k)$ of the minimal polynomial of $\omega$ over $\bbF_{q^k}$. The matrix $M$ is called a {\em Singer cycle} of $GL(2,q^k)$ and any conjugate subgroup of $\langle M \rangle$ is said to be a {\em Singer subgroup} of $GL(2,q^k)$. Recall from Section \ref{subsec:finitefields} that the multiplicative order of $M$ is equal to $q^{2k}-1$ and that we have the field isomorphism 
\[
\bbF_{q^{2k}}\cong \bbF_{q^k}\left[M\right]=\{0\} \cup \langle M\rangle.
\]
As a consequence, the group $\langle M\rangle$ acts regularly by multiplication on the elements of $\bbF_{q^{2k}}^*\cong \langle M\rangle$, which can be identified with the set of non-zero vectors of $\bbF_{q^k}^2$.

The intersection of a Singer subgroup of $GL(2,q^k)$ with $SL(2,q^k)$ is called  a {\em Singer subgroup} of $SL(2,q^k)$. As it is stated in Section \ref{subsec:finitefields}, the eigenvalues of $M$ are $\omega$ and $\omega^{q^k}$ and $\det(M)=\omega^{q^k+1}$. Therefore, $\det(M^{q^k-1})=\det(M)^{q^k-1}=1$, since the multiplicative order of $\omega$ is $q^{2k}-1$. Consequently, the group 
\begin{equation}\label{eq:def_barH}
\bar{\bH}=\langle M\rangle\cap SL(2,q^k)=\langle M^{q^k-1}\rangle 
\end{equation}
is a Singer subgroup of $SL(2,q^k)$ and its order is $q^k+1$. From now on, we fix this group $\bar{\bH}$. Throughout the following section, we will study how  $\bar{\bH}$ acts on the set of lines of  $\bbF_{q^k}^2$.

\subsection{An orbital description of the spread of lines}

As it was stated in (\ref{eq:orbrectes}), we know that ${\cal G}_{q^k}(1,2)=\orb_{\bar{\bG}}(\rsp(1, 0))$. Since $|{\cal G}_{q^k}(1,2)|=|\bar{\bH}|=q^k+1$, it seems quite natural to wonder if $\bar{\bH}$ acts regularly on ${\cal G}_{q^k}(1,2)$. As we will see, this is true for even characteristic whereas for odd characteristic this action is not transitive and we obtain $\cG_{q^k}(1,2)$ as the union of exactly two orbits.

To do this, let us come back to the group $\langle M\rangle \subset GL(2, q^k)$ for a while and study its action on $\cG_{q^k}(1,2)$.
Next we see that the set of matrices in $\langle M\rangle$ stabilizing an arbitrary line is exactly the group of scalar matrices. 

\begin{lemma}\label{lem:stabM}
Let $l$ be an arbitrary line of ${\cal G}_{q^k}(1,2)$. It holds that 
\[
\stab_{\langle M\rangle}(l)=\{a I_2\ |\ a\in \bbF_{q^k}^*\}.
\]
\end{lemma}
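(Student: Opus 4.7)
The plan is to deduce the result from the regular action of $\langle M\rangle$ on the non-zero vectors of $\bbF_{q^k}^2$ together with an order-counting argument.

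First, I would establish the inclusion $\{aI_2 \mid a\in\bbF_{q^k}^*\}\subseteq \stab_{\langle M\rangle}(l)$. Scalar matrices clearly fix every line, so the only substantive point is that such matrices actually lie in $\langle M\rangle$. For this I would invoke the field isomorphism $\bbF_{q^{2k}}\cong \bbF_{q^k}[M]=\{0\}\cup\langle M\rangle$ recalled in Section~\ref{subsec:finitefields}: under this identification, the subfield $\bbF_{q^k}\cdot I_2$ of $\bbF_{q^k}[M]$ is mapped onto the unique subfield of order $q^k$ in $\bbF_{q^{2k}}$, so each non-zero scalar matrix $aI_2$ is a power of $M$. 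In particular $\{aI_2\mid a\in\bbF_{q^k}^*\}$ is a subgroup of $\langle M\rangle$ of order exactly $q^k-1$.

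Next I would compute the size of the stabilizer via the orbit-stabilizer theorem. Since $\bbF_{q^{2k}}^*\cong \langle M\rangle$ acts by multiplication on the non-zero vectors of $\bbF_{q^k}^2$, this action is regular, and hence $\langle M\rangle$ acts transitively on $\cG_{q^k}(1,2)$: given two lines $\rsp(v_1),\rsp(v_2)$, the unique $M^j$ sending $v_1$ to $v_2$ also sends $\rsp(v_1)$ to $\rsp(v_2)$. Therefore the orbit of any line $l$ has cardinality $|\cG_{q^k}(1,2)|=q^k+1$, and orbit-stabilizer gives
\[
|\stab_{\langle M\rangle}(l)|=\frac{|\langle M\rangle|}{q^k+1}=\frac{q^{2k}-1}{q^k+1}=q^k-1.
\]

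Finally, comparing orders, the inclusion from the first step must be an equality, which is the claim. No step here looks like a serious obstacle; the only care needed is the identification of scalar matrices with powers of $M$ via the field isomorphism, and everything else is an instance of the orbit-stabilizer formula applied to the regular action underlying Singer cycles.
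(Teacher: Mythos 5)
Your proof is correct, but it follows a genuinely different route from the paper's. The paper argues directly: given $A\in\stab_{\langle M\rangle}(l)$ and a generator $(x_1,x_2)$ of $l$, the matrix $A-aI_2$ (where $a$ is the eigenvalue of $A$ on $l$) lies in the field $\bbF_{q^k}[M]$ and annihilates a non-zero vector, so it must be the zero matrix, whence $A=aI_2$; the reverse inclusion is dismissed as trivial. You instead establish the easy inclusion $\{aI_2\mid a\in\bbF_{q^k}^*\}\subseteq\stab_{\langle M\rangle}(l)$ (correctly noting that the real content is that these scalars sit inside $\langle M\rangle$, via the identification of $\bbF_{q^k}\cdot I_2$ with the subfield of order $q^k$ in $\bbF_{q^{2k}}\cong\bbF_{q^k}[M]$), then deduce transitivity of the $\langle M\rangle$-action on lines from its regular action on non-zero vectors, and finish with orbit--stabilizer: $|\stab_{\langle M\rangle}(l)|=(q^{2k}-1)/(q^k+1)=q^k-1$, matching the cardinality of the scalar subgroup. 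The paper's argument is shorter and purely algebraic, exploiting the field structure of $\bbF_{q^k}[M]$ at a single stroke; your argument is a clean counting proof that additionally makes transparent why the stabilizer is independent of the chosen line (transitivity plus the normality of the scalar subgroup), and it surfaces explicitly the fact---left implicit in the paper---that the scalar matrices actually belong to $\langle M\rangle$. Both are valid; they buy slightly different things.
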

\begin{proof}
Let $A\in \stab_{\langle M\rangle}(l)$ and consider a non-zero vector $(x_1, x_2)\in \bbF_{q^k}^2$ such that $l=\rsp(x_1, x_2)$. Thus, $(x_1, x_2) A=a(x_1, x_2)$, for some $a\in \bbF_{q^k}^*$. Equivalently, 
$(x_1, x_2) (A -a I_2)=(0, 0)$. But $A -a I_2\in \bbF_{q^k}\left[M\right]$, which is a field, and thus the only non-invertible matrix is the null matrix. It follows that $A=aI_2$ and the result is true since the other inclusion is trivial.
\end{proof}

From the definition of $\bar{\bH}$ (see (\ref{eq:def_barH})), one has that $\stab_{\bar{\bH}}(l)=\stab_{\langle M\rangle}(l)\cap SL(2,q^k)$ for any arbitrary line $l$ in ${\cal G}_{q^k}(1,2)$.  Consequently, the next result follows directly from the previous lemma.

\begin{lemma}\label{lem:stabHbar}
Let $l$ be an arbitrary line of ${\cal G}_{q^k}(1,2)$. It holds that 
\[
\stab_{\bar{\bH}}(l)= \{a I_2\ |\ a\in \bbF_{q^k}^*,\ a^2=1\}=\{I_2,-I_2\}.
\]
\end{lemma}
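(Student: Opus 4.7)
The proof is a short consequence of the preceding Lemma and the explicit description of $\bar{\bH}$ given in~(\ref{eq:def_barH}). The plan is to start from the observation, already noted in the paragraph just before the statement, that $\bar{\bH} \leq \langle M\rangle$ implies
\[
\stab_{\bar{\bH}}(l) \;=\; \stab_{\langle M\rangle}(l) \cap \bar{\bH} \;=\; \stab_{\langle M\rangle}(l) \cap SL(2,q^k).
\]
Applying Lemma \ref{lem:stabM} to the first factor, the problem reduces to identifying which scalar matrices $aI_2$ with $a \in \bbF_{q^k}^*$ have determinant $1$, that is, which satisfy $a^2=1$ in $\bbF_{q^k}$.

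Since $\bbF_{q^k}$ is a field, the polynomial $x^2-1$ has at most two roots. Its roots are $1$ and $-1$, which coincide when $\car(\bbF_q)=2$ and are distinct otherwise; in either case the set of solutions is exactly $\{1,-1\}$. Translating back to scalar matrices and observing that every $aI_2$ with $a\in\bbF_{q^k}^*$ already lies in $\langle M\rangle = \bbF_{q^k}[M]^*$ (since $\bbF_{q^k}\subseteq \bbF_{q^k}[M]$), the two sets in the statement agree with $\stab_{\bar{\bH}}(l)$. I do not anticipate any real obstacle: once Lemma \ref{lem:stabM} is in hand, this is a one-line determinant computation, and the only point requiring a brief mention is the collapse $\{I_2,-I_2\}=\{I_2\}$ in even characteristic, which does not affect the validity of the stated equality as sets.
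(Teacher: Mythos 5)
Your proof is correct and follows essentially the same route as the paper: use the definition $\bar{\bH}=\langle M\rangle\cap SL(2,q^k)$ to reduce to Lemma \ref{lem:stabM}, then pick out the scalar matrices of determinant one. The paper states this in one line and leaves the determinant computation implicit; your version just spells out the minor details (the roots of $x^2-1$, and the observation that the scalars $aI_2$ already lie in $\langle M\rangle = \bbF_{q^k}[M]^*$), which is consistent with, and if anything slightly more explicit than, what the authors intended.
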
 

\begin{remark}
Note that the stabilizer subgroups of $\bar{\bH}$ on all the lines in $\cG_{q^k}(1,2)$ coincide. Moreover, if the field characteristic is even, this stabilizer is the trivial group. On the other hand, for odd characteristic, it consists of two elements, each of one fixing the whole set $\cG_{q^k}(1,2)$.
\end{remark}

\begin{proposition}\label{prop:AccioHbarra}
If  $\car(\bbF_q)=2$, then the group $\bar{\bH}$ acts regularly on ${\cal G}_{q^k}(1,2)$.
If  $\car(\bbF_q)\neq 2$, there are exactly two orbits of the action of $\bar{\bH}$ on $\cG_{q^k}(1,2)$, each of them contains $\frac{q^k+1}{2}$ elements.
\end{proposition}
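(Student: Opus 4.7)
The proof reduces almost entirely to Lemma \ref{lem:stabHbar} together with the orbit--stabilizer theorem, so the plan is short and essentially computational.

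First, I would recall the two ingredients I need: the order of the acting group, $|\bar{\bH}|=q^k+1$ (from the definition in (\ref{eq:def_barH}) as $\langle M^{q^k-1}\rangle$, where $M$ has multiplicative order $q^{2k}-1$), and the fact that $|\cG_{q^k}(1,2)|=q^k+1$. By Lemma \ref{lem:stabHbar}, for every line $l\in\cG_{q^k}(1,2)$ the stabilizer is
\[
\stab_{\bar{\bH}}(l)=\{a I_2\mid a\in\bbF_{q^k}^*,\ a^2=1\},
\]
so the stabilizer depends only on the solutions of $a^2=1$ in $\bbF_{q^k}$.

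For the case $\car(\bbF_q)=2$, the equation $a^2=1$ has the unique solution $a=1$, hence $\stab_{\bar{\bH}}(l)=\{I_2\}$ for every $l$. By the orbit--stabilizer theorem, each orbit has size $|\bar{\bH}|/|\stab_{\bar{\bH}}(l)|=q^k+1$, which equals $|\cG_{q^k}(1,2)|$. Therefore any orbit is the whole Grassmannian and the stabilizers are trivial, so the action of $\bar{\bH}$ on $\cG_{q^k}(1,2)$ is regular.

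For the case $\car(\bbF_q)\neq 2$, the equation $a^2=1$ has the two solutions $a=\pm 1$, so $\stab_{\bar{\bH}}(l)=\{I_2,-I_2\}$ for every $l$. Orbit--stabilizer gives $|\orb_{\bar{\bH}}(l)|=(q^k+1)/2$ for every $l\in\cG_{q^k}(1,2)$. Since orbits partition the Grassmannian and all of them have the same cardinality $(q^k+1)/2$, the number of orbits is $(q^k+1)/\bigl((q^k+1)/2\bigr)=2$, which finishes the argument. There is no real obstacle here; the only point to be careful about is to state clearly that the stabilizers coincide on every line so that both characterizations (size of each orbit, number of orbits) follow uniformly from Lemma \ref{lem:stabHbar}.
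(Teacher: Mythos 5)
Your argument is correct and is essentially the paper's proof: both apply Lemma \ref{lem:stabHbar} together with the orbit--stabilizer theorem, using $|\bar{\bH}|=q^k+1=|\cG_{q^k}(1,2)|$. You merely spell out the even-characteristic case, which the paper dismisses as clear, and make explicit that all stabilizers coincide so the orbits have equal size.
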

\begin{proof}
If the field characteristic is even, the result is clear. Assume that $\car(\bbF_q)\neq 2$. From Lemma \ref{lem:stabHbar}, we  have that $\stab_{\bar{\bH}}(l)=\langle -I_2\rangle$, for any line $l\in \cG_{q^k}(1,2)$. Hence, $\orb_{\bar{\bH}}(l)$ contains exactly $\frac{q^k+1}{2}$ lines and $\cG_{q^k}(1,2)$ is partitioned into two orbits with the same cardinality. 
\end{proof}

In view of the previous result, the Grassmannian $\cG_{q^k}(1,2)$ can be described as a unique orbit in even characteristic. For odd characteristic, it is enough to consider a couple of lines lying in different orbits to describe $\cG_{q^k}(1,2)$ as the union of their respective orbits. 
As an example of this construction, the next result gives a specific partition of $\cG_{q^k}(1,2)$ in odd characteristic. 

\begin{proposition}\label{prop:eximpar}
If  $\car(\bbF_q)\neq 2$, then 
\[
\cG_{q^k}(1,2)=\orb_{\bar{\bH}}(\rsp(1,0))\ \dot\cup\ \orb_{\bar{\bH}}(\rsp(0,1)).
\] 
\end{proposition}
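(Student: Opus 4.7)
The plan is to reduce the statement to checking that $\rsp(1,0)$ and $\rsp(0,1)$ lie in different $\bar{\bH}$-orbits. Proposition \ref{prop:AccioHbarra} has already established that, when $\car(\bbF_q) \neq 2$, the group $\bar{\bH}$ partitions $\cG_{q^k}(1,2)$ into exactly two orbits, each of size $\frac{q^k+1}{2}$. Since $\rsp(1,0) \neq \rsp(0,1)$, once we verify that they belong to distinct orbits, the union $\orb_{\bar{\bH}}(\rsp(1,0)) \,\dot\cup\, \orb_{\bar{\bH}}(\rsp(0,1))$ will automatically have cardinality $q^k+1 = |\cG_{q^k}(1,2)|$, forcing the claimed equality.

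To compare these two orbits, I would pass through the field isomorphism $\bbF_{q^{2k}} \cong \bbF_{q^k}[M]$ sending $\omega \mapsto M$. Any element of $\bar{\bH} = \langle M^{q^k-1}\rangle$ has the form $M^{j(q^k-1)} = a_j I_2 + b_j M$ with $a_j, b_j \in \bbF_{q^k}$, corresponding to $\omega^{j(q^k-1)} = a_j + b_j \omega$ on the field side. Since $M$ is the companion matrix of the minimal polynomial of $\omega$ over $\bbF_{q^k}$, its first row is $(0,1)$, so $\rsp(1,0) \cdot M^{j(q^k-1)} = \rsp(a_j, b_j)$. Consequently $\rsp(1,0)$ and $\rsp(0,1)$ would share an orbit exactly when some $j$ achieves $a_j = 0$, equivalently $\omega^{j(q^k-1) - 1} \in \bbF_{q^k}^*$.

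The last step is purely arithmetic: since $\bbF_{q^k}^*$ consists of those $\omega^m$ with $(q^k+1) \mid m$, the previous condition reads $j(q^k-1) \equiv 1 \pmod{q^k+1}$. In odd characteristic, both $q^k-1$ and $q^k+1$ are even, so $\gcd(q^k-1, q^k+1) = 2$, and the congruence has no solution because $2 \nmid 1$. Therefore no element of $\bar{\bH}$ carries $\rsp(1,0)$ to $\rsp(0,1)$, and the two lines lie in distinct orbits, finishing the proof.

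The main obstacle I anticipate is translating the matrix-level identity $\rsp(1,0) \cdot A = \rsp(0,1)$ into a clean numerical condition on exponents of $\omega$; once that translation is in place, the appearance of $\gcd(q^k-1, q^k+1) = 2$ is immediate and is precisely where the hypothesis $\car(\bbF_q) \neq 2$ plays its role.
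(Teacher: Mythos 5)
Your proof is correct, and it takes a genuinely different route from the one in the paper. The paper also reduces to showing the two orbits are distinct via Proposition \ref{prop:AccioHbarra}, but then argues by contradiction: from $\rsp(1,0)\cdot A=\rsp(0,1)$ it deduces $A=aM$ for some $a\in\bbF_{q^k}^*$, and then raises both sides to the power $\frac{q^{2k}-1}{2}$, using that $A^{q^k+1}=I_2$ (since $|\bar{\bH}|=q^k+1$) and $a^{q^k-1}=1$, to force $M^{(q^{2k}-1)/2}=I_2$, contradicting $\mathrm{ord}(M)=q^{2k}-1$. Your argument instead parametrizes $\bar{\bH}=\{M^{j(q^k-1)}\}$, expands each element in the $\bbF_{q^k}$-basis $\{I_2,M\}$, and translates membership of $\rsp(0,1)$ in the orbit into the exponential congruence $j(q^k-1)\equiv 1\pmod{q^k+1}$, which fails because $\gcd(q^k-1,q^k+1)=2\nmid 1$ in odd characteristic. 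The gain of your version is transparency: the role of odd characteristic is isolated in a single, self-evident gcd computation, rather than being distributed across a careful choice of exponent and two separate evenness checks as in the paper. The paper's version avoids the explicit identification $\bbF_{q^k}^*=\{\omega^m : (q^k+1)\mid m\}$ and keeps everything at the level of matrix orders, which some readers may find closer to the surrounding Singer-subgroup formalism. Both proofs use the same key structural input (that $\bbF_{q^k}[M]$ is a field, making the map from matrices to field elements injective); you could tighten your write-up by remarking explicitly that $b_j\neq 0$ when $a_j=0$ (it is automatic since $M^{j(q^k-1)}\neq 0$), but that is a minor point.
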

\begin{proof}
By means of Proposition \ref{prop:AccioHbarra}, to prove the statement, it suffices to show that $\orb_{\bar{\bH}}(\rsp(1,0))$ and $\orb_{\bar{\bH}}(\rsp(0,1))$ are different orbits. 

Arguing by contradiction, assume that $\rsp(0,1)\in \orb_{\bar{\bH}}(\rsp(1,0))$. Thus, there exist $A\in\bar{\bH}$ such that $\rsp(0,1)=\rsp(1,0)\cdot A$, that is, the first row of $A$ must be of the form $(0,a)$ for some $a\in \bbF_{q^k}^\ast$. 
On the other hand, recall that the first row of the companion matrix $M$ is $(0,1)$. It follows that the first row of $A-aM$ is the zero vector. But $A-aM\in  \bbF_{q^k}\left[M\right] $, which is a field. Thus the null matrix is the only non-invertible matrix on it. Therefore, $A=aM$ and, in particular, $A^{\frac{q^{2k}-1}{2}}=(aM)^{\frac{q^{2k}-1}{2}}$. Now, since $q^k-1$ is an even number, we can put
\[
A^{\frac{q^{2k}-1}{2}}=(A^{q^{k}+1})^{\frac{q^{k}-1}{2}}=I_2,
\]
since $A\in \bar{\bH}$ and $|\bar{\bH}|=q^k+1$. In contrast, 
\[
(aM)^{\frac{q^{2k}-1}{2}}=a^{\frac{q^{2k}-1}{2}} M^{\frac{q^{2k}-1}{2}}=(a^{q^k-1})^{\frac{q^{k}+1}{2}} M^{\frac{q^{2k}-1}{2}}=M^{\frac{q^{2k}-1}{2}}
\]
since $a$ is a element of the group $\bbF_{q^k}^\ast$. We conclude that $M^{\frac{q^{2k}-1}{2}}=I_2$, which is a contradiction since the order of $M$ is $q^{2k}-1$. 
\end{proof}

\begin{remark}\label{rem:impar_no_regular}
At this point, a natural question is if, for odd characteristic, there exist any other subgroup of order $q^k+1$ of $\bar{\bG}$ providing the spread of lines of $\bbF_{q^k}^2$ as the unique orbit of a regular action. The answer to this question is that it is not true in general. For instance,  if $q=3$ and $k=2$, we have checked by using GAP, that there is no subgroup of order $q^k+1=10$ of $\bar{\bG}\cong SL(2, 9)\rtimes C_2$ acting transitively on $\cG_{9}(1,2)$ (by $C_2$ we denote the cyclic group of order $2$).
\end{remark}

%%%%%%%%%%%%%%%%%%%%%%%%%

\subsection{From the spread of lines to  the planar spread ${\cS}$}\label{sec:planarorb}
Now, we will translate the orbital structure of $\cG_{q^k}(1,2)$ under the action of $\bar{\bH}$, studied in the previous section, to the planar spread ${\cal S}$ of $\bbF_q^{2k}$ defined in $(\ref{def: spread Manganiello})$. To do so, we will use the field reduction map $\varphi$ defined in (\ref{eq:fieldreduction}) and the group monomorphism $\psi$ defined in (\ref{eq:hom_grasmanianas}).  Let us consider $\bH=\psi(\bar{\bH})$, the subgroup of $\bG$ isomorphic to $\bar{\bH}$ via $\psi$.

With this notation, the following lemma states that all subspaces in $\cS$ have the same stabilizer subgroup under the action of  $\bH$.

\begin{lemma}\label{lem:stabH}
Let $\cL$ be an arbitrary subspace in ${\cal S}$. It holds that 
\[
\stab_{\bH}(\cL)=\{I_{2k}, -I_{2k}\}.
\]
\end{lemma}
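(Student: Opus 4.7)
The plan is to transport Lemma \ref{lem:stabHbar} from the line Grassmannian to the spread $\cS$ via the field reduction map $\varphi$ and the group monomorphism $\psi$. Concretely, since $\cS=\varphi(\cG_{q^k}(1,2))$ by the observations following (\ref{def: spread Manganiello}), any $\cL\in\cS$ can be written as $\cL=\varphi(l)$ for a unique line $l\in\cG_{q^k}(1,2)$.

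First, I would use the equivariance relation (\ref{eq:accioequiv}), namely $\varphi(l\cdot\bar A)=\varphi(l)\cdot\psi(\bar A)$, to establish the key bijection
\[
\psi\bigl(\stab_{\bar{\bH}}(l)\bigr)=\stab_{\bH}(\cL).
\]
Indeed, since $\bH=\psi(\bar{\bH})$ and $\psi$ is injective, any $A\in\bH$ is of the form $A=\psi(\bar A)$ for a unique $\bar A\in\bar{\bH}$; the condition $\cL\cdot A=\cL$ translates, using (\ref{eq:accioequiv}) and the injectivity of $\varphi$, into $l\cdot\bar A=l$. So the restriction of $\psi$ to $\stab_{\bar{\bH}}(l)$ is a bijection onto $\stab_{\bH}(\cL)$.

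Next, I would plug in Lemma \ref{lem:stabHbar}, which gives $\stab_{\bar{\bH}}(l)=\{I_2,-I_2\}$, and compute the images under $\psi$. Because $\phi:\bbF_{q^k}\to\bbF_q[P]$ is a field isomorphism sending $1\mapsto I_k$ (and hence $-1\mapsto -I_k$), the definition of $\psi$ in (\ref{eq:hom_grasmanianas}) yields $\psi(I_2)=I_{2k}$ and $\psi(-I_2)=-I_{2k}$. Combining these two steps gives exactly $\stab_{\bH}(\cL)=\{I_{2k},-I_{2k}\}$, as claimed.

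There is no real obstacle here: the statement is essentially a bookkeeping corollary of Lemma \ref{lem:stabHbar} once one checks that the $\bar{\bG}$-action on $\cG_{q^k}(1,2)$ and the $\bG$-action on $\cS$ are intertwined by $\varphi$ and $\psi$. The only mild care needed is to note that the same bijection $\psi$ maps stabilizer to stabilizer (not merely carries one into the other), which follows from $\psi$ being a group isomorphism between $\bar{\bH}$ and $\bH$ together with the injectivity of $\varphi$.
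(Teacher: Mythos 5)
Your proof is correct and follows essentially the same route as the paper: write $\cL=\varphi(l)$, use the equivariance (\ref{eq:accioequiv}) to identify $\stab_{\bH}(\cL)=\psi(\stab_{\bar{\bH}}(l))$, and apply Lemma \ref{lem:stabHbar}. The only (harmless) addition is your explicit verification that $\psi(\pm I_2)=\pm I_{2k}$, which the paper leaves implicit.
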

\begin{proof}
Since ${\cal S}=\varphi(\cG_{q^k}(1,2))$, there exists a line $l \in \cG_{q^k}(1,2)$ such that $\varphi(l)={\cal L}$. Now, using the equivalence of the action provided in (\ref{eq:accioequiv}), it is clear that $\stab_{\bH}(\cL)=\psi(\stab_{\bar{\bH}}(l))$. Thus, the result follows  from Lemma \ref{lem:stabHbar}.
\end{proof}

Now, the orbital structure of the planar spread ${\cal S}$ under the action of $\bH$ can be described straightforwardly:

\begin{theorem}\label{teo:accioHsobreS}
If $\car(\bbF_q)=2$, then $\bH$ acts regularly on ${\cal S}$.
If $\car(\bbF_q)\neq 2$, then there are exactly two orbits of the action of $\bH$ on $\cS$, each of them with the same cardinality $\frac{q^k+1}{2}$.
\end{theorem}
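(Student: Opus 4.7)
The strategy is to transport the orbital structure of $\bar{\bH}$ acting on $\cG_{q^k}(1,2)$, already established in Proposition \ref{prop:AccioHbarra}, to the action of $\bH=\psi(\bar{\bH})$ on $\cS$. The two key ingredients are the bijection $\varphi:\cG_{q^k}(1,2)\to\cS$ from (\ref{eq:fieldreduction}) and the equivariance relation $\varphi(l\cdot\bar{A})=\varphi(l)\cdot\psi(\bar{A})$ recorded in (\ref{eq:accioequiv}).

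First, since $\varphi$ is a bijection and $\psi$ an isomorphism, the equivariance in (\ref{eq:accioequiv}) immediately implies that $\varphi$ induces a bijection between the $\bar{\bH}$-orbits on $\cG_{q^k}(1,2)$ and the $\bH$-orbits on $\cS$, preserving orbit sizes. In particular, for each line $l\in\cG_{q^k}(1,2)$ with $\cL=\varphi(l)\in\cS$, one has $\orb_{\bH}(\cL)=\varphi(\orb_{\bar{\bH}}(l))$ and $|\orb_{\bH}(\cL)|=|\orb_{\bar{\bH}}(l)|$.

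For the case $\car(\bbF_q)=2$, Proposition \ref{prop:AccioHbarra} says $\bar{\bH}$ acts regularly on $\cG_{q^k}(1,2)$, so the corresponding action of $\bH$ on $\cS$ is transitive with trivial stabilizers; equivalently, by Lemma \ref{lem:stabH}, $\{I_{2k},-I_{2k}\}=\{I_{2k}\}$ in this characteristic, so each orbit has size $|\bH|=q^k+1=|\cS|$, giving a single regular orbit. For $\car(\bbF_q)\neq 2$, Lemma \ref{lem:stabH} yields $|\stab_{\bH}(\cL)|=2$ for every $\cL\in\cS$, hence $|\orb_{\bH}(\cL)|=(q^k+1)/2$ by the orbit-stabilizer relation; since $|\cS|=q^k+1$, the action partitions $\cS$ into exactly two orbits of equal cardinality $(q^k+1)/2$.

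There is no substantial obstacle: once the bijective correspondence of orbits through $\varphi$ is set up, everything follows by applying Proposition \ref{prop:AccioHbarra} and Lemma \ref{lem:stabH}. The only point that requires care is checking the orbit-preserving property of $\varphi$, but this is an immediate consequence of (\ref{eq:accioequiv}) together with the injectivity of $\varphi$.
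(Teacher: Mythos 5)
Your proof is correct and follows essentially the same route as the paper: transport the orbital structure of $\bar{\bH}$ on $\cG_{q^k}(1,2)$ to that of $\bH$ on $\cS$ via the equivariant bijection $\varphi$ from (\ref{eq:accioequiv}). The only slight variant is in the odd-characteristic case: you count orbits directly from Lemma \ref{lem:stabH} together with orbit--stabilizer and $|\cS|=q^k+1$, whereas the paper transports the two-orbit decomposition of Proposition \ref{prop:AccioHbarra} through $\varphi$; both are short, correct, and rest on the same facts.
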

\begin{proof}
Assume that $\car(\bbF_q)=2$ and consider an arbitrary subspace ${\cal L}\in{\cal S}$. Since ${\cal S}=\varphi(\cG_{q^k}(1,2))$, there exists a line $l \in \cG_{q^k}(1,2)$ such that $\varphi(l)={\cal L}$. Now, Proposition \ref{prop:AccioHbarra} states that $\cG_{q^k}(1,2)=\orb_{\bar{\bH}}(l)$ and we can use (\ref{eq:accioequiv}) to see that
\[
\cS=\varphi(\cG_{q^k}(1,2))=\varphi(\orb_{\bar{H}}(l))=\{\varphi(l \cdot \bar{A})\ |\ \bar{A}\in \bar{\bH} \}= 
\{{\cal L}\cdot A\ |\ A\in \bH \}=\orb_{\bH}(\cal L) .
\]
Suppose now that $\car(\bbF_q)\neq 2$. By Proposition \ref{prop:AccioHbarra}, we know that 
\[
\cG_{q^k}(1,2)=\orb_{\bar{\bH}}(l_1) \ \dot\cup \ \orb_{\bar{\bH}}(l_2),
\]
where $l_1,l_2\in \cG_{q^k}(1,2)$ lie in different orbits under the action of $\bar{\bH}$. Let ${\cal L}_1, {\cal L}_2 \in \cS$ such that $\varphi(l_1)={\cal L}_1$ and $\varphi(l_2)={\cal L}_2$. Then, we have that 
\[
\varphi(\orb_{\bar{\bH}}(l_i))=\{\varphi(l_i \cdot \bar{A})\ |\ \bar{A}\in \bar{\bH} \}= 
\{{\cal L}_i \cdot A\ |\ A\in \bH \}=\orb_{\bH}({\cal L}_i),
\]
for $i=1,2$. In particular, $\orb_{\bH}({\cal L}_i)$ has $\frac{q^k+1}{2}$ elements since $\varphi$ is injective. Moreover, 
\begin{eqnarray}
{\cal S}=\varphi(\cG_{q^k}(1,2)) & = & \varphi(\orb_{\bar{\bH}}(l_1) \ \dot\cup \ \orb_{\bar{\bH}}(l_2))\nonumber \\
                                                  & = & \varphi(\orb_{\bar{\bH}}(l_1) )\ \dot\cup \  \varphi(\orb_{\bar{\bH}}(l_2))=
\orb_{\bH}({\cal L}_1)\  \dot\cup \   \orb_{\bH}({\cal L}_2). \nonumber
\end{eqnarray}
\end{proof}

Recall that ${\cal U}_k=\rsp( I_k | 0_k )\in \cG_{q}(k,2k)$. Now we denote ${\cal V}_k=\rsp( 0_k | I_k )\in \cG_{q}(k,2k)$. Since $\varphi(\rsp(1,0))={\cal U}_k$ and  $\varphi(\rsp(0,1))={\cal V}_k$,  by (\ref{eq:accioequiv}) and  Proposition \ref{prop:eximpar}, in odd characteristic we have a nice orbital description of the spread $\cS$.

\begin{proposition}\label{prop:S_impar}
If $\car(\bbF_q)\neq 2$, then  ${\cal S}=\orb_{\bH}({\cal U}_k)\ \dot\cup\  \orb_{\bH}({\cal V}_k)$.
\end{proposition}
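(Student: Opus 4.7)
The plan is to deduce Proposition~\ref{prop:S_impar} as a direct translation of Proposition~\ref{prop:eximpar} via the field reduction map $\varphi$, exactly parallel to (and simpler than) the odd-characteristic case of Theorem~\ref{teo:accioHsobreS}.

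First, I would recall the two ingredients already available. From Proposition~\ref{prop:eximpar}, in odd characteristic the Grassmannian of lines decomposes as
\[
\cG_{q^k}(1,2) = \orb_{\bar{\bH}}(\rsp(1,0)) \,\dot\cup\, \orb_{\bar{\bH}}(\rsp(0,1)).
\]
From the discussion around (\ref{eq:accioequiv}), together with the identification $\bH = \psi(\bar{\bH})$, applying $\varphi$ to an orbit under $\bar{\bH}$ produces the corresponding orbit under $\bH$: for any line $l \in \cG_{q^k}(1,2)$,
\[
\varphi(\orb_{\bar{\bH}}(l)) = \{\varphi(l \cdot \bar A) \mid \bar A \in \bar{\bH}\} = \{\varphi(l) \cdot A \mid A \in \bH\} = \orb_{\bH}(\varphi(l)).
\]

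Next, I would apply $\varphi$ to the decomposition above. Using the identities $\varphi(\rsp(1,0)) = \cU_k$ and $\varphi(\rsp(0,1)) = \cV_k$ together with the fact that $\varphi(\cG_{q^k}(1,2)) = \cS$, this yields
\[
\cS = \varphi(\orb_{\bar{\bH}}(\rsp(1,0))) \cup \varphi(\orb_{\bar{\bH}}(\rsp(0,1))) = \orb_{\bH}(\cU_k) \cup \orb_{\bH}(\cV_k).
\]

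Finally, I would argue that the union is disjoint. Since $\varphi$ is injective, it maps disjoint subsets to disjoint subsets, and so the disjointness of the two orbits on the line side transfers immediately to the spread side. There is no real obstacle here; the entire content of the statement is already encoded in Proposition~\ref{prop:eximpar}, and Proposition~\ref{prop:S_impar} is simply its image under $\varphi$, with the equivariance (\ref{eq:accioequiv}) ensuring that orbits under $\bar{\bH}$ correspond to orbits under $\bH$. The only subtle point to mention explicitly is the use of injectivity of $\varphi$ to preserve the disjointness of the decomposition.
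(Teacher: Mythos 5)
Your proof is correct and follows essentially the same path as the paper: Proposition \ref{prop:S_impar} is obtained by pushing the decomposition of $\cG_{q^k}(1,2)$ from Proposition \ref{prop:eximpar} through the field reduction map $\varphi$, using the equivariance (\ref{eq:accioequiv}) to convert $\bar{\bH}$-orbits into $\bH$-orbits, the identities $\varphi(\rsp(1,0))=\cU_k$ and $\varphi(\rsp(0,1))=\cV_k$, and injectivity of $\varphi$ to preserve disjointness. The paper presents this as an immediate corollary (with the justification compressed into the sentence preceding the statement), and your write-up simply spells out the same chain of reasoning.
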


\begin{remark}
By Remark \ref{rem:impar_no_regular} and using the equivalent action provided by (\ref{eq:accioequiv}), for odd characteristic,  it is not possible, in general, to obtain $\cal S$ as the only orbit of the regular action of a subgroup (of order $q^k+1$) of $\bG$.
\end{remark}

\subsection{From the planar spread $\cS$ to flags}\label{subsec: ODFC}

In this section we will focus on how to extend the action of $\bH$ on $k$-dimensional subspaces given in Section \ref{sec:planarorb} to full flags on $\bbF_q^{2k}$,  in order to obtain flag codes with the maximum distance having the spread ${\cal S}$ defined in (\ref{def: spread Manganiello}) as its $k$-projected code. The following results show that $\bH$ always provides optimum distance full flag codes  of the best size as the union of at most two orbits.

\begin{theorem}\label{cor:size_ODFC}
Let $\cF=(\cF_1, \ldots, \cF_{2k-1})$ be a full flag on $\bbF_q^{2k}$ such that $\cF_k\in {\cal S}$. 

If $\car(\bbF_q)=2$, then $\orb_{\bH}(\cF)$ is an optimum distance full flag code with the largest possible size, which is $q^k+1$.

If $\car(\bbF_q)\neq 2$, then $\orb_{\bH}(\cF)$ is an optimum distance full flag code with size $\frac{q^k+1}{2}$.
\end{theorem}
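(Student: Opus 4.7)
The plan is to apply Proposition \ref{teo:ODFCorbital} directly, with acting group $\bT = \bH$. So I must verify its two hypotheses for the flag $\cF$: that $\orb_{\bH}(\cF_k)$ has maximum distance, and that $\stab_{\bH}(\cF_k) \subseteq \stab_{\bH}(\cF_i)$ for every $i \in \{1, \ldots, 2k-1\}$. Once these are in place, Proposition \ref{teo:ODFCorbital} delivers both the optimum distance property and the size formula $|\orb_{\bH}(\cF)| = |\orb_{\bH}(\cF_k)|$, and the two cases of the theorem become a computation using Theorem \ref{teo:accioHsobreS}.

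First I verify maximum distance of $\orb_{\bH}(\cF_k)$. Since $\cF_k \in \cS$, Theorem \ref{teo:accioHsobreS} guarantees that $\orb_{\bH}(\cF_k) \subseteq \cS$ (it equals $\cS$ in even characteristic and is one of the two orbits covering $\cS$ in odd characteristic). Any subset of a planar spread still has all pairwise intersections trivial and thus has distance $2k$, the maximum for codes in $\cG_q(k,2k)$. Next, I verify the stabilizer containment. By Lemma \ref{lem:stabH}, $\stab_{\bH}(\cF_k) = \{I_{2k}, -I_{2k}\}$. Both elements act as scalars on $\bbF_q^{2k}$, so they fix every subspace of $\bbF_q^{2k}$, and in particular every $\cF_i$. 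Hence $\stab_{\bH}(\cF_k) \subseteq \stab_{\bH}(\cF_i)$ for all $i$ (with equality holding as well, by Proposition \ref{prop:contenido}).

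With both hypotheses satisfied, Proposition \ref{teo:ODFCorbital} yields that $\orb_{\bH}(\cF)$ is an optimum distance full flag code of cardinality $|\orb_{\bH}(\cF_k)|$. Now the two cases split according to characteristic. If $\car(\bbF_q) = 2$, then $-I_{2k} = I_{2k}$, so $\stab_{\bH}(\cF_k) = \{I_{2k}\}$ and Theorem \ref{teo:accioHsobreS} gives $|\orb_{\bH}(\cF_k)| = |\cS| = q^k + 1$; this is the largest possible size for an optimum distance full flag code on $\bbF_q^{2k}$ by Theorem \ref{teo:odfc_maxsize}. If $\car(\bbF_q) \neq 2$, then $|\stab_{\bH}(\cF_k)| = 2$ and, again by Theorem \ref{teo:accioHsobreS}, $|\orb_{\bH}(\cF_k)| = \frac{q^k+1}{2}$, giving the claimed size.

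There is no real obstacle here; the heavy lifting has already been done. The only point that deserves a line of justification is that $-I_{2k}$ acts trivially on every subspace (so that Proposition \ref{teo:ODFCorbital} applies regardless of the choice of flag extending $\cF_k$), and that $\orb_{\bH}(\cF_k) \subseteq \cS$ inherits maximum distance automatically. Note that in odd characteristic the resulting code has size $\frac{q^k+1}{2}$, which is strictly below the bound of Theorem \ref{teo:odfc_maxsize}; the subsequent step (handled by Theorem \ref{cor:union_ODFC} applied to the two orbits of Proposition \ref{prop:S_impar}) will be needed to reach the bound $q^k+1$ in that case.
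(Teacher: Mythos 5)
Your proposal is correct and follows the same route as the paper: verify the two hypotheses of Proposition \ref{teo:ODFCorbital} for $\bT=\bH$ (maximum distance of $\orb_{\bH}(\cF_k)$ since $\cF_k\in\cS$, and the stabilizer containment via Lemma \ref{lem:stabH}), then read off the cardinality from Theorem \ref{teo:accioHsobreS}. The only difference is that you spell out explicitly why $\{I_{2k},-I_{2k}\}$ fixes every subspace and why a subset of a spread retains maximum distance, steps the paper leaves implicit.
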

\begin{proof}
Since  $\cF_k\in {\cal S}$, it is clear that $\orb_{\bH}(\cF_k)$ has maximum distance. Moreover, Lemma \ref{lem:stabH} states that $\stab_{\bH}(\cF_k)$ stabilizes $\cF_i$ for all $i=1, \ldots, 2k-1$. Therefore, Proposition \ref{teo:ODFCorbital} allows us to conclude that $\orb_{\bH}(\cF)$ is an optimum distance full flag code of cardinality $|\orb_{\bH}(\cF)|=|\orb_{\bH}(\cF_k)|$. Finally, this cardinality is given by  Theorem \ref{teo:accioHsobreS}.
\end{proof}

For odd characteristic, we can give an optimum distance full flag code of the largest possible size by considering the union of two orbit flag codes constructed as in Theorem \ref{cor:size_ODFC} and using Theorem \ref{cor:union_ODFC}.

\begin{proposition}\label{prop:largest_size_impar}
Assume that $\car(\bbF_q)\neq 2$. 
Let $\cF=(\cF_1, \ldots, \cF_{2k-1})$  and $\cF'=(\cF'_1, \ldots, \cF'_{2k-1})$ be two full flags on $\bbF_q^{2k}$ such that $\cF_k,\,  \cF'_k\in {\cal S}$ lie in different orbits of the action of $\bH$ on $\cS$. Then the full flag code 
\[
\cC=\orb_{\bH}(\cF)\ \dot\cup\  \orb_{\bH}(\cF')
\]
is  an optimum distance flag code with the largest possible size, that is, $q^k+1$. 
\end{proposition}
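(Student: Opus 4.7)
The strategy is to verify the hypotheses of Theorem \ref{cor:union_ODFC} with $m=2$ and then read off the cardinality from Theorem \ref{cor:size_ODFC}. So my plan is: first, to guarantee that the two pieces $\orb_{\bH}(\cF)$ and $\orb_{\bH}(\cF')$ are genuinely disjoint and that their union of $k$-projected codes is a constant dimension code of maximum distance; second, to check the stabilizer-containment condition; and third, to sum up the sizes.

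For the first step, by hypothesis $\cF_k$ and $\cF'_k$ lie in different $\bH$-orbits on $\cS$. Theorem \ref{teo:accioHsobreS} (equivalently Proposition \ref{prop:S_impar}) tells us that in odd characteristic the action of $\bH$ on $\cS$ has exactly two orbits, each of size $\frac{q^k+1}{2}$, and so
\[
\orb_{\bH}(\cF_k)\ \dot\cup\ \orb_{\bH}(\cF'_k) = \cS.
\]
Since $\cS$ is a planar spread of $\bbF_q^{2k}$, this union is a constant dimension code of maximum distance $2k$, which is precisely the second hypothesis of Theorem \ref{cor:union_ODFC}.

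For the second step, Lemma \ref{lem:stabH} gives $\stab_{\bH}(\cF_k)=\stab_{\bH}(\cF'_k)=\{I_{2k},-I_{2k}\}$. The only nontrivial point is that $-I_{2k}$ stabilizes every vector subspace of $\bbF_q^{2k}$ (since $v\mapsto -v$ leaves any subspace invariant), so in particular $-I_{2k}\in \stab_{\bH}(\cF_i)\cap \stab_{\bH}(\cF'_i)$ for every $i\in\{1,\ldots,2k-1\}$. Hence
\[
\stab_{\bH}(\cF_k)\subseteq \stab_{\bH}(\cF_i) \quad\text{and}\quad \stab_{\bH}(\cF'_k)\subseteq \stab_{\bH}(\cF'_i)
\]
for all such $i$, which is exactly the remaining assumption of Theorem \ref{cor:union_ODFC}.

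Applying that theorem we immediately conclude that $\cC=\orb_{\bH}(\cF)\ \dot\cup\ \orb_{\bH}(\cF')$ is an optimum distance full flag code of cardinality $|\orb_{\bH}(\cF_k)|+|\orb_{\bH}(\cF'_k)|$. By Theorem \ref{cor:size_ODFC}, each of these two orbits has size $\frac{q^k+1}{2}$, and therefore $|\cC|=q^k+1$, the maximum possible size by Theorem \ref{teo:odfc_maxsize}. There is no real obstacle here; the proof is essentially an assembly of previously established pieces, and the only point worth flagging is the innocuous observation that $-I_{2k}$ fixes every subspace, which is what makes the stabilizer containment free of charge.
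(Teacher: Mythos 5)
Your proof is correct and takes essentially the same route as the paper: invoke Lemma \ref{lem:stabH} to get the stabilizer-containment hypothesis of Theorem \ref{cor:union_ODFC} (the observation that $-I_{2k}$ fixes every subspace is exactly what the paper implicitly relies on), then read the cardinality off Theorem \ref{teo:accioHsobreS}. You have merely spelled out the verifications that the paper leaves implicit.
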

\begin{proof}
Notice that, by Lemma \ref{lem:stabH}, we can use Theorem \ref{cor:union_ODFC} to conclude that $\cC$ is
an optimum distance full flag code of size  $|\orb_{\bH}(\cF_k)|+|\orb_{\bH}(\cF'_k)|$, which is $q^k+1$ by Theorem \ref{teo:accioHsobreS}.
\end{proof}

As a consequence, we can directly translate the orbital description of the spread $\cS$ given in Proposition \ref{prop:S_impar} to obtain a specific construction of an optimum distance full flag code with the largest possible size on fields of odd characteristic:

\begin{corollary}
Assume that $\car(\bbF_q)\neq 2$. 
Let $\cF=(\cF_1, \ldots, \cF_{2k-1})$  and $\cF'=(\cF'_1, \ldots, \cF'_{2k-1})$ be two full flags on $\bbF_q^{2k}$ such that $\cF_k=\cU_k$ and $\cF'_k=\cV_k$. Then the full flag code 
\[
\cC=\orb_{\bH}(\cF)\ \dot\cup\  \orb_{\bH}(\cF')
\]
is an optimum distance flag code with the largest possible size, that is, $q^k+1$. 
\end{corollary}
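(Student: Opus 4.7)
The statement is the final corollary at the end of Section \ref{subsec: ODFC}, which specializes Proposition \ref{prop:largest_size_impar} to a concrete choice of pivot subspaces at dimension $k$. My plan is to check that the pair of flags $\cF$ and $\cF'$ satisfies the hypotheses of that proposition, which reduces the whole argument to a citation of Propositions \ref{prop:S_impar} and \ref{prop:largest_size_impar}.

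First I would observe that both $\cF_k=\cU_k=\rsp(I_k\mid 0_k)$ and $\cF'_k=\cV_k=\rsp(0_k\mid I_k)$ belong to the planar spread $\cS$ of $\bbF_q^{2k}$ defined in (\ref{def: spread Manganiello}). This is essentially by inspection: both subspaces appear explicitly in the list defining $\cS$. So the standing requirement of Proposition \ref{prop:largest_size_impar} that the $k$-th subspaces of the two flags lie in $\cS$ is satisfied.

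The key step is then to show that $\cU_k$ and $\cV_k$ lie in \emph{different} orbits of the action of $\bH$ on $\cS$. This is precisely the content of Proposition \ref{prop:S_impar}, which asserts the disjoint decomposition
\[
\cS=\orb_{\bH}(\cU_k)\ \dot\cup\ \orb_{\bH}(\cV_k)
\]
for odd characteristic. (Under the hood this was obtained by transporting, via the field reduction map $\varphi$ and the group monomorphism $\psi$, the analogous statement for the lines $\rsp(1,0)$ and $\rsp(0,1)$ in $\cG_{q^k}(1,2)$ proved in Proposition \ref{prop:eximpar}.)

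With both hypotheses of Proposition \ref{prop:largest_size_impar} verified, I would invoke that proposition directly to conclude that $\cC=\orb_{\bH}(\cF)\ \dot\cup\ \orb_{\bH}(\cF')$ is an optimum distance full flag code of cardinality $q^k+1$. There is no real obstacle here: the whole work has already been done in Theorem \ref{teo:accioHsobreS}, Proposition \ref{prop:S_impar} and Proposition \ref{prop:largest_size_impar}; the corollary is simply the explicit instance obtained by taking $\cU_k$ and $\cV_k$ as representatives of the two $\bH$-orbits on $\cS$.
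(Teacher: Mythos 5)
Your proposal is correct and matches the paper's (implicit) reasoning exactly: the paper offers no separate proof for this corollary but presents it as an immediate consequence of Proposition \ref{prop:S_impar} combined with Proposition \ref{prop:largest_size_impar}, which is precisely the chain of citations you set up. Nothing to add.
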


\begin{remark}
The main goal of this paper was to provide an orbital construction of optimum distance full flag codes on $\bbF_q^{2k}$ having the spread $\cS$ as its $k$-projected code by using some subgroup of the group $\bG$. By Theorem \ref{teo:odfc_maxsize}, these flag codes have both maximum distance and cardinality. For the case, $q=3$ and $k=2$, we have checked by using GAP that no subgroup $\bN$ of $\bG$ such that  ${\cal S}=\orb_{\bN}({\cal U}_2)$ is able to extend its action to the standard full flag on $\bbF_3^{4}$ providing an optimum distance flag code as a single orbit. Hence, for odd characteristic, it is not possible to give a general construction of orbit full flag code, under the action of a subgroup of $\bG$, with the maximum distance and the best size for that distance. Consequently, the action of our subgroup $\bH$ allows us to achieve our objective with the lowest possible number of orbits.

\end{remark}

\section{Conclusions}
In this paper we have adressed the construction of optimum distance full flag codes having an orbital structure. To do this, we have stated first some useful properties of orbit flag codes under the action of an arbitrary subgroup of $GL(n,q)$. For the particular case $n=2k$, we have given sufficient conditions to obtain optimum distance orbit full flag codes on $\bbF_q^{2k}$, starting from orbit codes of dimension $k$ and maximum distance. 

Moreover, we have shown how, under these conditions, these starting orbit codes completely drive our construction in the following sense: their property of being of maximum distance, as well as their cardinality and orbital structure, are perfectly translated to the flag codes level. As a consequence, the cardinality of optimum distance full flag codes constructed in this way can be increased by considering a starting subspace code of dimension $k$ consisting of more than one orbit. Besides, we have presented a specific orbital construction of optimum distance full flag codes on $\bbF_q^{2k}$ with the best size using at most two orbits.

\end{document}